\documentclass{elsarticle}
\usepackage{amsmath,mathtools,amssymb}
\allowdisplaybreaks%允许公式分页
\usepackage{graphicx}  
\usepackage{subfigure}
\usepackage{color}
\usepackage{verbatim} 
\usepackage{booktabs}
\usepackage{bm}  
\usepackage{dutchcal}
\usepackage[left=3cm,right=3cm,top=2.5cm,bottom=2.5cm]{geometry} 
\usepackage{lineno,hyperref}
\usepackage{tablists} 
\usepackage{makecell} 
\usepackage{tikz}
\usepackage{hyperref}
\usepackage{pstricks}
\usepackage[titletoc]{appendix}
\usepackage{pstricks-add}
\usepackage{float}
\usepackage{mathrsfs} 
\usepackage{amsthm}
\usepackage{xcolor}
\usepackage{url}
\biboptions{numbers,sort&compress}
\hypersetup{
    colorlinks=true,
    linkcolor=blue,
    filecolor=gray,
    urlcolor=blue,
    citecolor=blue}

\numberwithin{equation}{section}
\newtheorem{definition}{Definition}[section]

\newtheorem{theorem}{Theorem}[section]
\newtheorem{lemma}{Lemma}[section]	
\newtheorem{proposition}{Proposition}[section]

\journal{Elsevier} %指出投稿的期刊

\begin{document} %开始正文书写
    
    \begin{frontmatter} %开始组织Front Matter
        
        \title{Higher descent equations based on 2-term \texorpdfstring{$L_{\infty}$}{L infty} algebras  } %论文题目
        
     \author[cnus]{Mengyao Wu\corref{cor1}}
     \ead{mengyao_wu_w@163.com}
     \author[pku]{Danhua Song}
     \ead{danhua_song@163.com}
     \author[cnus]{Jie Yang}
     \ead{5972@cnu.edu.cn}
     \cortext[cor1]{Corresponding author.}
     \address[cnus]{School of Mathematical Sciences, Capital Normal University, Beijing, China}
     \address[pku]{Beijing International Center for Mathematical Research, Peking University, Beijing, China}
     \date{}
	\begin{abstract} 
        
       In this paper, we develop the higher descent equations for higher gauge theories within the framework of 2‑term $L_{\infty}$
      algebras. Starting from a multilinear symmetric invariant polynomial, we construct a family of higher Chern-Simons type characteristic classes and verify that they satisfy the higher descent equations. These polynomials encode both the higher Chern-Weil theorem and the higher gauge anomalies.

    \end{abstract}

\begin{keyword}
Higher gauge theory, Higher Chern-Simons theory,  Chern-Simons type characteristic classes, Gauge anomaly
\end{keyword}	
\end{frontmatter}
\section{Introduction}

Descent equations are a core tool in the cohomological analysis of gauge anomalies. They provide an algebraic framework that links anomalies in different dimensions through a cascade of differential and variational relations, thereby facilitating the derivation of anomaly cancellation conditions. Guo et al. \cite{han1985chern} revealed this structure by constructing Chern-Simons type characteristic classes that satisfy the descent equations.  Subsequently, Guo et al. \cite{han1985anomalies} employed Chern-Simons type characteristic classes together with the cohomology groups of gauge transformations to study non‑Abelian anomalies and the gauge invariant Wess-Zumino-Witten effective action \cite{wess1971consequences,witten1983global,chou2009gauge}, thereby providing a unified topological description of these objects. These works have made descent equations a widely used tool in anomaly analysis \cite{Chou_1985,Chou_1984,chou1984symmetric}.

In the ordinary gauge theory, the Chern-Simons type characteristic classes are defined as 
\begin{align} \label{Q-p}
    Q_r^{(k)}(A_0,\cdots,A_i,\cdots, A_k;\Delta_k)=\frac{r!}{(r-k)!} \int_{\Delta_{k}} dt_1 \wedge \cdots \wedge dt_k P(\eta^{1,0},\cdots,\eta^{k,0},F_{0,t_1 \cdots t_k}^{r-k}),0\leq k \leq r.
\end{align}
Here, for $k+1$ Lie algebra $\mathfrak{g}$-valued 1-forms $A_i,i \in \{0,\cdots, k\}$, the interpolated connection $A_{0,t_1\cdots t_k}=A_0+\sum_{i=1}^k t_i \eta^{i,0}, \eta^{i,0}=A_i-A_0$ and its corresponding curvature $F_{0,t_1\cdots t_k}$ are defined on the $k$-simplex $\Delta_k$. $P$ is a $G$-invariant form ($\text{Lie}(G)=\mathfrak{g}$) (see Ref \cite{han1985chern}). These forms satisfy the descent equations:
\begin{align}\label{equa}
    dQ_r^{(k)}(A_0,\cdots,A_i,\cdots ,A_k;\Delta_k)=\sum\limits_{i=0}^{k} (-1)^i Q_r^{(k-1)}(A_0,\cdots,\hat{A_i},\cdots, A_k;\partial \Delta_k).
\end{align}
For $k=0$, $Q_r^{(0)}$ \eqref{Q-p} is exactly the $r$-th Chern class $P(F^r)$ \cite{chern1974characteristic}, while for $k=1$, the descent equation \eqref{equa} gives the Chern-Weil theorem. Building on this foundation, later works have generalized descent equations in several directions.
Alekseev et al. \cite{alekseev2018chern} combined descent equations with Kashiwara-Vergne theory and established a correspondence between solutions of Kashiwara’s
first equation in Lie theory and universal solutions of the descent equations for $r = 2$. Kang et al. \cite{kang2018descent} introduced a family of generalized descent equations labeled by any integer $r \geq 2$, with solutions that are forms of degrees ranging from 0 to $2r-1$. 

Another way to extend descent equations is to include higher degree gauge fields. In Refs. \cite{izaurieta2015chern,izaurieta2017chern}, Izaurieta et al. generalized ordinary Chern classes $P(F^r)$ to $P(F^r \wedge H)$ by introducing a $\mathfrak{g}$-valued 2-form field $B$ with curvature $H=dB+[A,B]$, and further extended the Chern-Weil theorem to construct $4d$ Chern-Simons gravity and Chern-Simons-Antoniadis-Savvidy supergravity actions for the Maxwell (super)algebra. Although these constructions use higher degree forms, they still rely on Lie algebras that describe parallel transport for  point particles. To describe higher-dimensional extended objects such as strings and branes, the subsequent developments of descent equations have taken place within the higher gauge theory.
The mathematical foundations of the higher gauge theory include higher algebraic structures such as 2-categories, 2-groups \cite{baze2004higherv,baze2004highervi}, $L_{\infty}$ algebras \cite{lada1993sh,lada1995strongly}, and higher geometric structures, notably gerbes \cite{brylinski1993,breen2005}. In particular, $L_\infty$ algebras describe higher gauge structures up to homotopy, and have become a fundamental tool in formulating higher Chern-Simons theories. For example, as a special case of $L_\infty$ algebras, which is called strict $L_{\infty}$ algebra, has  trivial higher brackets for $n \geq 3$ \cite{bai2013,lang2015crossed}. Based on this algebraic structure, differential (2-)crossed modules are used to form the strict higher gauge theories. Song et al. \cite{danhua2023} constructed $4d$ 2-Chern-Simons and $5d$ 3-Chern-Simons theories based on differential (2-)crossed modules using generalized differential calculus. Subsequently, Song et al. \cite{danhua2024higher} generalized the construction of Izaurieta et al. \cite{izaurieta2015chern,izaurieta2017chern} by replacing the Lie algebra-valued curvatures $F$ and $H$ with curvatures taking values in a differential crossed module. This generalization establishes a higher Chern-Weil theorem and yields higher Chern-Simons theories in arbitrary $(2r+2)$ dimension. By further generalizing the extended Cartan homotopy formula, they implicitly established the higher analogs of the $k=1$ and $k=2$ descent equations, namely the higher Chern-Weil theorem and the higher triangle equation \cite{danhuaechf}. Within the strict framework, the 2‑Chern-Simons action has been shown to be gauge invariant  \cite{zucchiniwilson, schenkel5d}.

 To explore gauge structures more complicated than the strict $L_{\infty}$ algebras, several works have considered semistrict higher Chern-Simons theories based on general $L_{\infty}$ algebras with non-vanishing higher homotopies. For instance, Soncini and Zucchini constructed a $4d$ semistrict higher Chern-Simons theory and studied higher Chern-Simons theories arising from the Alexandrov-Kontsevich-Schwarz-Zaboronsky formalism based on 2-term $L_\infty$ algebras in a series of works \cite{soncini2014, zucchini 4-cs, zucchiniopei, zucchiniopeii, zucchiniaksz, zuchiniholo}. Within this framework, generalizations of Chern-Weil theorem and Chern-Simons theories to the case involving a 1-form $A$ paired with a single $p$-form $B$ have been developed in \cite{Salgado2021,Salgado2022}, where a particular structure constant is set to zero, allowing for a systematic extension of transgression formulas to higher degree gauge fields.

In the strict case, only the first two descent equations have been developed. While in the semistrict case only some lower-dimensional examples have been obtained. The general structure of the higher descent equations for semistrict higher gauge theories remains largely unexplored. Two natural questions therefore arise: 
\begin{itemize}
    \item Can we extend the higher Chern-Simons type characteristic classes within a semistrict $L_\infty$ framework?
    \item Can we obtain a complete set of higher descent equations that unify the higher Chern-Weil theorem and the higher triangle equation?
\end{itemize}

In this paper, we answer these questions by constructing a semistrict 2-Chern-Simons theory in arbitrary $(2r+2)$ dimension.  We define  a balanced multilinear symmetric form on a  balanced 2-term $L_{\infty}$ algebra. Starting from a 2-connection $(A,B)$ with curvatures $(\mathcal{F},\mathcal{H})$ and constructing a gauge invariant closed $(2r+3)$-form, we develop a family of higher Chern-Simons type characteristic classes and verify that they satisfy the higher descent equations. This framework naturally incorporates the higher Chern-Weil theorem, the higher triangle equations, the higher anomaly, thereby providing a unified setting for higher gauge anomalies.
 
  The paper is organized as follows. In section \ref{section 2}, we review the fundamental definitions and properties of 2-term $L_{\infty}$ algebras, and introduce 2-term $L_{\infty}$ algebra connections together with their gauge transformations. In section \ref{section 3}, we define a multilinear symmetric invariant polynomial for balanced  2-term $L_{\infty}$ algebras. In section \ref{section 4}, we construct a semistrict higher invariant form, prove that it is both closed and gauge invariant, and define a family of higher Chern-Simons type characteristic classes satisfying higher descent equations.  Given that these higher Chern-Simons type characteristic classes involve integration over a $k$-simplex, we also provide the evaluated results of these integrals.

\section {2-term \texorpdfstring{$L_{\infty}$}{L infty} algebra connections and gauge transformation}\label{section 2}

 We begin by reviewing the fundamental definitions and properties of  2-term $L_{\infty}$ algebra $\mathfrak{v}$, then focus on the $\mathfrak{v}$-connections and associated gauge transformations. 

 A 2-term $L_\infty$ algebra $\mathfrak{v}$ consists of two real vector spaces $\mathfrak{v}_0$ and $\mathfrak{v}_1$, equipped with the following linear maps: a linear map $\alpha \colon \mathfrak{v}_1 \to \mathfrak{v}_0$, a bilinear bracket $[\cdot, \cdot]\colon \mathfrak{v}_0 \wedge \mathfrak{v}_0 \to \mathfrak{v}_0$, a bilinear action $[\cdot, \cdot]\colon \mathfrak{v}_0 \otimes \mathfrak{v}_1 \to \mathfrak{v}_1$, and a trilinear bracket $[\cdot, \cdot, \cdot]\colon \mathfrak{v}_0 \wedge \mathfrak{v}_0 \wedge \mathfrak{v}_0 \to \mathfrak{v}_1$. These maps  satisfy a set of coherence identities. Since these identities are numerous, we put them in the appendix \ref{appendices}.

Then, we consider the differential forms valued in the 2-term $L_{\infty}$ algebra  $\mathfrak{v}$ as follows.
 We follow the notations and definitions of \cite{soncini2014,zucchini 4-cs,zucchiniopei,zucchiniopeii,zucchiniaksz,zuchiniholo}.
Let $\Omega^{k}(M,\mathfrak{v}_i)$ be the space of $\mathfrak{v}_i$-valued differential $k$-forms on the manifold $M$ over $C^{\infty} (M)$, where $i=0,1$.
 For $A \in \Omega^k(M,\mathfrak{v}_0)$, denote $A=\sum\limits_{a} A^a x_a$, where $A^a$ is a scalar differential $k$-form and $x_a$ is an element in $\mathfrak{v}_0$. 
 Define
 \begin{align}
     &dA:=\sum\limits_{a} dA^a x_a.
    \end{align}

For $A_1 \in \Omega^{k_1}(M,\mathfrak{v}_0),A_2 \in \Omega^{k_2}(M,\mathfrak{v}_0),A_3 \in \Omega^{k_3}(M,\mathfrak{v}_0)$, define
\begin{align}
     [A_1,A_2]&:=\sum\limits_{ab} A_1^a \wedge A_2^b [x_a,x_b],\\
     [A_1,A_2,A_3]&:=\sum\limits_{abc} A_1^a \wedge A_2^b \wedge A_3^c [x_a,x_b,x_c].
    \end{align}
Due to the graded-symmetry relation, we have
     \begin{align}\label{grad}
    [A_{\tau(1)},A_{\tau(2)}]&=(-1)^{\tau}(-1)^{\sum\limits_{\substack{ i<j\\ \tau(i) > \tau(j)}} |A_i| |A_j|}[A_1,A_2],\\
     [A_{\sigma(1)},A_{\sigma(2)},A_{\sigma(3)}]&=(-1)^{\sigma}(-1)^{\sum\limits_{\substack{ i<j\\ \sigma(i) > \sigma(j)}} |A_i| |A_j|}[A_1,A_2,A_3],
 \end{align}
 where $\tau \in S_2$ and $\sigma \in S_3$, $S_2,S_3$ are symmetric groups, and $|A|=k \Leftrightarrow A \in \Omega^{k} (M,\mathfrak{v}_0)$.
 
 We denote $B:=\sum\limits_{b} B^b Y_b$ with $Y_b\in \mathfrak{v}_1$. We define
 \begin{align}
     \alpha(B)&:=\sum\limits_{b} B^b \alpha(Y_b),\\
     [A,B]&:=\sum\limits_{ab} A^a \wedge B^b [x_a,Y_b].
 \end{align}

  Define a $\mathfrak{v}$-connection on $M$ as a doublet $(A,B)$, where $A \in \Omega^{1}(M,\mathfrak{v}_0),B \in \Omega^{2}(M,\mathfrak{v}_1)$. The corresponding curvature is a pair $(\mathcal{F},\mathcal{H})$ consisting of a $\mathfrak{v}_0$-valued 2-forms and $\mathfrak{v}_1$-valued 3-forms
  \begin{align}\label{cur}
  \mathcal{F}&=dA+\frac{1}{2}[A , A]-\alpha(B),\\
  \mathcal{H}&=dB+[A , B]-\frac{1}{6}[A,A,A].
    \end{align}

The curvatures satisfy the  2-Bianchi identity
\begin{align}\label{bianchi 1}
    d\mathcal{F}+[A , \mathcal{F}]+\alpha({\mathcal{H}})&=0,\\
    d\mathcal{H}+[A ,\mathcal{H}]-[\mathcal{F} , B]+\frac{1}{2}[A,A,\mathcal{F}]&=0.
\end{align}

 Moreover, a finite 2-term $L_{\infty}$ algebra 1-gauge transformation consists of the following set of data.
 \begin{itemize}
 \item a map $g=(g_0,g_1,g_2) \in \text{Map}(M,\text{Aut}_1(\mathfrak{v}))$, where $\text{Aut}_1(\mathfrak{v})$ is the set of all automorphism of $\mathfrak{v}$.
 
 \item a flat connection doublet $(\sigma_g,\Sigma_g)$, i.e.,
 \begin{align}
     d\sigma_g+\frac{1}{2}[\sigma_g , \sigma_g]-\alpha(\Sigma_g)=0,\\
     d\Sigma_g+[\sigma_g , \Sigma_g]-\frac{1}{6}[\sigma_g , \sigma_g,\sigma_g]=0.
 \end{align}
\item an element $\tau_g \in \Omega^1(M,\mathfrak{aut}_1(\mathfrak{v}))$, where $\mathfrak{aut}_1(\mathfrak{v})$ is the set of all 2-derivations of $\mathfrak{v}$ satisfying
\begin{align}
    d\tau_g(x)+[\sigma_g,\tau_g(x)]-[x,\Sigma_g]+\frac{1}{2}[\sigma_g,\sigma_g,x]+\tau_g([\sigma_g,x]+\alpha(\tau_g(x)))=0.
\end{align}
\end{itemize}
These data are required to satisfy the following relations
\begin{align}
    g_0^{-1}dg_0(x)-[\sigma_g,x]-\alpha(\tau_g(x))&=0,\\
    g_1^{-1}dg_1(X)-[\sigma_g,X]-\alpha(\tau_g(\alpha(X)))&=0,\\
    g_1^{-1}(dg_2(x,y)-2g_2(g_0^{-1} dg_0(x),y))
     -[\sigma_g,x,y]-\tau_g([x,y])+[x,\tau_g(y)]-[y,\tau_g(x)]&=0.
\end{align}

We shall denote the set of all 2-term $L_{\infty}$ algebra 1-gauge transformations by $\text{Gau}_1(M, \mathfrak{v})$.
 The gauge transformed connection doublet $(A^{g},B^{g})$ is defined to be 
 \begin{align}
    A^{g}&=g_0(A-\sigma_g),\\\nonumber
     B^{g}&=g_1(B-\Sigma_g+\tau_{g}(A-\sigma_g))-\frac{1}{2}g_2(A-\sigma_g,A-\sigma_g).
 \end{align}
The associated curvature transforms as follows
\begin{align}
    \mathcal{F}^{g}&=g_0(\mathcal{F}),\\\nonumber
   \mathcal{H}^{g}&=g_1(\mathcal{H}-\tau_{g}(\mathcal{F}))+g_2(A-\sigma_g,\mathcal{F}).
\end{align}

Based on the definitions given above, a key property is that both  the (fake-)flatness conditions and the 2-Bianchi identities are invariant under the 2-term $L_{\infty}$ algebra 1-gauge transformation. This is a known result in the literature \cite{zucchiniaksz}. Associated 2-gauge transformations and infinitesimal gauge counterparts are documented in \cite{zucchiniaksz}. Since the Wess–Zumino–Witten anomaly is naturally described by finite gauge transformations, we omit further discussion of these structures.

\section{ Balanced 2-term \texorpdfstring{$L_{\infty}$}{L infty} algebra with invariant forms}\label{section 3}

 In this section, we focus on the balanced 2-term $L_{\infty}$ algebra, which provides the algebraic framework for the higher descent equations in the subsequent section. Its definition and basic properties are established in \cite{zucchini 4-cs}. In contrast to the invariant form presented there, we here define a multilinear symmetric invariant polynomial. Moreover, we use the convention that lowercase letters $(x,y,z,\cdots)$ stand for an element of $\mathfrak{v}_0$ and uppercase letters $(X,Y,Z,\cdots)$ stand for an element of $\mathfrak{v}_1$.

\begin{definition}
    A 2-term $L_{\infty}$ algebra $\mathfrak{v}$ is balanced if dim $\mathfrak{v}_0$ = dim $\mathfrak{v}_1$.
\end{definition}
\begin{proposition}
Any non balanced 2-term $L_{\infty}$ algebra $\mathfrak{v}$, It can be minimally extended to a balanced 2-term $L_{\infty}$ algebra $\tilde{\mathfrak{v}}$.
\end{proposition}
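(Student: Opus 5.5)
The plan is to adjoin to the smaller of the two spaces a trivial (abelian, inert) summand of the right dimension, and then to check that every structure map and every coherence identity in Appendix \ref{appendices} still holds. Write $n_0=\dim\mathfrak{v}_0$ and $n_1=\dim\mathfrak{v}_1$, and suppose $n_0\neq n_1$. I treat the two cases separately; by symmetry of the argument it is enough to describe one in detail.

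Suppose $n_0<n_1$. Let $\mathfrak{u}$ be a real vector space of dimension $n_1-n_0$, and set $\tilde{\mathfrak{v}}_0=\mathfrak{v}_0\oplus\mathfrak{u}$ and $\tilde{\mathfrak{v}}_1=\mathfrak{v}_1$. Extend the structure maps by zero on $\mathfrak{u}$:
\begin{itemize}
\item $\tilde\alpha=\iota\circ\alpha$, where $\iota\colon\mathfrak{v}_0\hookrightarrow\tilde{\mathfrak{v}}_0$ is the inclusion;
\item $[x+u,y+v]=[x,y]$ and $[x+u,X]=[x,X]$;
\item $[x+u,y+v,z+w]=[x,y,z]$.
\end{itemize}
Thus $\mathfrak{u}$ is a central abelian ideal that acts trivially on $\tilde{\mathfrak{v}}_1$ and drops out of the 3-bracket.

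Now suppose $n_1<n_0$. Set $\tilde{\mathfrak{v}}_1=\mathfrak{v}_1\oplus\mathfrak{w}$ with $\dim\mathfrak{w}=n_0-n_1$, and keep $\tilde{\mathfrak{v}}_0=\mathfrak{v}_0$. Put $\tilde\alpha|_{\mathfrak{w}}=0$, let $\mathfrak{v}_0$ act on $\mathfrak{w}$ trivially, and let the 3-bracket take values in $\mathfrak{v}_1\subset\tilde{\mathfrak{v}}_1$.

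Verifying the coherence identities is the main step. Each identity in Appendix \ref{appendices} is multilinear in its arguments, so I expand it and sort the terms by whether a new summand appears.
\begin{itemize}
\item When all arguments lie in the old spaces, every term is the corresponding term for $\mathfrak{v}$, and the identity holds by assumption.
\item When some argument lies in $\mathfrak{u}$ (respectively $\mathfrak{w}$), every term vanishes. The reason is that each structure map kills the new summand, and no structure map has a nonzero component landing in it.
\end{itemize}
The one point needing care is that the identities also compose maps. For instance, $[x,\alpha(X)]=\alpha([x,X])$ and $\alpha([x,y,z])$ appear in the Jacobiator identity. In the first case $\tilde\alpha$ lands in $\mathfrak{v}_0$ and never produces a $\mathfrak{u}$-component. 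In the second case elements of $\mathfrak{w}$ never arise as outputs of any structure map. Hence no term ever feeds a new-summand element back into a bracket, and the two-sorted bookkeeping closes. The finitely many identities can then be checked one by one; none of them needs anything beyond these vanishing statements.

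For \emph{minimality}, note that any balanced 2-term $L_\infty$ algebra containing $\mathfrak{v}$ as a sub-$L_\infty$ algebra must have both of its spaces of dimension at least $\max(n_0,n_1)$. The construction attains exactly this bound, with $\dim\tilde{\mathfrak{v}}_0=\dim\tilde{\mathfrak{v}}_1=\max(n_0,n_1)$. Finally, the inclusion $\mathfrak{v}\hookrightarrow\tilde{\mathfrak{v}}$, which is the identity on the old components and has vanishing quadratic part, is a strict morphism of 2-term $L_\infty$ algebras. So $\tilde{\mathfrak{v}}$ is an extension in the required sense.
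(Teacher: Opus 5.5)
The paper gives no proof of this proposition. It is stated without argument right after the definition of balancedness, and the notion is attributed to Zucchini's work, so there is no in-paper argument to compare yours with. Against the statement as the paper formulates it, your argument is complete and correct: balanced means only $\dim\mathfrak{v}_0=\dim\mathfrak{v}_1$, and ``minimally'' is never defined. It can be stated more compactly. Your $\tilde{\mathfrak{v}}$ is the direct sum of $\mathfrak{v}$ with an abelian 2-term $L_\infty$ algebra concentrated in one degree, with all structure maps zero. The coherence identities of Appendix~\ref{appendices} therefore hold componentwise, which is what your term-by-term bookkeeping checks. Your dimension bound $\max(n_0,n_1)$ is attained. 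The projection $\tilde{\mathfrak{v}}\to\mathfrak{v}$ that kills the new summand is also a strict morphism, so $\tilde{\mathfrak{v}}$ is an extension whether this means containing $\mathfrak{v}$ or surjecting onto it.

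Be aware that your padding achieves balance only numerically. In the semistrict Chern--Simons literature, the point of balancing is to allow a non-singular invariant pairing $\mathfrak{v}_0\times\mathfrak{v}_1\to\mathbb{R}$, and your inert summand is invisible to every invariant form. Take $r=1$. For $u\in\mathfrak{u}$, \eqref{A.1} and \eqref{A.3} give $\langle u;[x,X]\rangle_{\mathfrak{v}_0\mathfrak{v}_1}=-\langle [x,u];X\rangle_{\mathfrak{v}_0\mathfrak{v}_1}=0$ and $\langle u;[x,y,z]\rangle_{\mathfrak{v}_0\mathfrak{v}_1}=-\langle z;[x,y,u]\rangle_{\mathfrak{v}_0\mathfrak{v}_1}=0$. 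For $w\in\mathfrak{w}$, \eqref{A.1} and \eqref{A.2} force $\langle\,\cdot\,;w\rangle_{\mathfrak{v}_0\mathfrak{v}_1}$ to vanish on $[\mathfrak{v}_0,\mathfrak{v}_0]+\alpha(\mathfrak{v}_1)$. For example, for the string Lie 2-algebra ($\mathfrak{v}_0=\mathfrak{g}$ simple, $\mathfrak{v}_1=\mathbb{R}$, $\alpha=0$), every invariant pairing on your $\tilde{\mathfrak{v}}$ is identically zero.

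A balanced extension that does carry a non-singular invariant pairing comes instead from a doubling-type construction. One takes $\tilde{\mathfrak{v}}_0=\mathfrak{v}_0\oplus\mathfrak{v}_1^{*}$ and $\tilde{\mathfrak{v}}_1=\mathfrak{v}_1\oplus\mathfrak{v}_0^{*}$, with coadjoint-type brackets and the canonical duality pairing. This costs dimension $n_0+n_1$ in each degree, so there ``minimal'' cannot mean minimal dimension.

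The paper never requires $\langle\cdots;\cdot\rangle_{\mathfrak{v}_0\mathfrak{v}_1}$ to be non-degenerate; the zero form satisfies \eqref{A.1}--\eqref{A.4}. Your reading is therefore consistent with the text as written. Just be clear that what you have proved is this purely dimensional version of the claim.
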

The central element of our construction is a multilinear symmetric invariant form that extends the one found in the literature \cite{zucchini 4-cs}. The brackets $\langle\cdots;\cdot\rangle_{\mathfrak{v}_0\mathfrak{v}_1}$ stand for a balanced multilinear symmetric form for balanced 2-term $L_{\infty}$ algebra $\mathfrak{v}$
\begin{align*}
    \langle\cdots;\cdot\rangle_{\mathfrak{v}_0\mathfrak{v}_1}:\underbrace{\mathfrak{v}_0 \otimes \cdots \otimes \mathfrak{v}_0}_{r} \otimes \mathfrak{v}_1 \rightarrow \mathbb{R}
\end{align*}
satisfying
\begin{align}
    \langle x_1,\cdots,x_i,\cdots,x_r;[x, X] \rangle_{\mathfrak{v}_0\mathfrak{v}_1}&=-\sum\limits_{i=1}^{r}\langle x_1,\cdots,[x,x_i],\cdots,x_r; X\rangle_{\mathfrak{v}_0\mathfrak{v}_1},\label{A.1}\\
  \langle x_1,\cdots,\alpha(Y_i),\cdots,x_r; Y\rangle_{\mathfrak{v}_0\mathfrak{v}_1}&=\langle x_1,\cdots,\alpha(Y),\cdots,x_r; Y_i\rangle _{\mathfrak{v}_0\mathfrak{v}_1}, \label{A.2}\\
   \langle x_1,\cdots,x_i,\cdots,x_r;[x,y,z] \rangle_{\mathfrak{v}_0\mathfrak{v}_1}&=-\sum\limits_{i=1}^{r} \langle x_1,\cdots,z,\cdots,x_r; [x,y,x_i]\rangle_{\mathfrak{v}_0\mathfrak{v}_1} \label{A.3}.
\end{align}
$\langle\cdots,\cdot\rangle_{\mathfrak{v}_0\mathfrak{v}_1}$ is symmetric if
\begin{align}\label{A.4}
    \langle x_1,\cdots,x_i,\cdots,x_j,\cdots,x_r; Y \rangle_{\mathfrak{v}_0\mathfrak{v}_1}= \langle x_1,\cdots,x_j,\cdots,x_i,\cdots,x_r; Y \rangle_{\mathfrak{v}_0\mathfrak{v}_1}.
\end{align}
$\langle\cdots,\cdot\rangle_{\mathfrak{v}_0\mathfrak{v}_1}$ is  $g=(g_0,g_1,g_2)$-invariant if

\begin{align}\label{invar}
   \langle g_0(x_1), \cdots ,g_0(x_r); g_1(X) \rangle_{\mathfrak{v}_0 \mathfrak{v}_1}&= \langle x_1,\cdots,x_r, X \rangle_{\mathfrak{v}_0 \mathfrak{v}_1},\\
  \langle  g_0(x_1), \cdots ,g_0(x_r);g_2(y,z) \rangle_{\mathfrak{v}_0 \mathfrak{v}_1}&= -\sum\limits_{i=1}^{r} \langle g_0(x_1), \cdots,g_0(z),\cdots, g_0(x_r) ;g_2(y,x_i) \rangle_{\mathfrak{v}_0 \mathfrak{v}_1},\\
    \langle x_1,\cdots,x_r;\tau_g (y) \rangle_{\mathfrak{v}_0 \mathfrak{v}_1}&=-\sum\limits_{i=1}^{r} \langle x_1, \cdots,y,\cdots, x_r ;\tau_g(x_i) \rangle_{\mathfrak{v}_0 \mathfrak{v}_1}.
\end{align}
There is a canonical extension of the invariant polynomial to the space of algebra-valued differential forms, given by the following definition
\begin{align}
  \langle A_1,\cdots,A_r; B \rangle_{\mathfrak{v}_0\mathfrak{v}_1}=A_1^{a_1}\wedge\cdots \wedge A_r^{a_r} \wedge B^b\langle x_{a_1},\cdots,x_{a_r};Y_b \rangle_{\mathfrak{v}_0\mathfrak{v}_1}.
\end{align}

The brackets possesses the following fundamental properties.
\begin{proposition}\label{pro}
    For $A_i \in \Omega^{k_i} (M, \mathfrak{v}_0),A\in  \Omega^{p} (M, \mathfrak{v}_0),A' \in  \Omega^{p'} (M, \mathfrak{v}_0),A'' \in  \Omega^{p''} (M, \mathfrak{v}_0),B_j \in  \Omega^{q_j} (M, \mathfrak{v}_1)$, we have
    \begin{align}
        \langle A_1,\cdots,A_i,\cdots,A_r;[A, B] \rangle_{\mathfrak{v}_0\mathfrak{v}_1}
       &=\sum\limits_{i=1}^{r}(-1)^{p(\sum\limits_{l=i}^{r} k_l)+1}\langle A_1,\cdots,[A,A_i],\cdots,A_r; B \rangle_{\mathfrak{v}_0\mathfrak{v}_1},\\
      \langle A_1,\cdots,\alpha(B_1),\cdots,A_r;B_2 \rangle_{\mathfrak{v}_0\mathfrak{v}_1}
      & =(-1)^{(q_1+q_2)(\sum\limits_{l=i+1}^{r} k_l)+q_1 q_2}\langle A_1,\cdots,\alpha(B_2),\cdots,A_r;B_1\rangle_{\mathfrak{v}_0\mathfrak{v}_1}, \\
        \langle A_1,\cdots,A_i,\cdots,A_j,\cdots,A_r;B \rangle_{\mathfrak{v}_0\mathfrak{v}_1}
       & =(-1)^{(k_i+k_j)(\sum\limits_{l=i+1}^{j-1} k_l)+k_i k_j}\langle A_1,\cdots,A_j,\cdots,A_i,\cdots,A_r;B\rangle_{\mathfrak{v}_0\mathfrak{v}_1},\\
        \langle A_1,\cdots,A_i,\cdots,A_r;[A,A',A''] \rangle_{\mathfrak{v}_0\mathfrak{v}_1}
       &=\sum\limits_{i=1}^{r}(-1)^{(k_i+p'')(\sum\limits_{l=i+1}^{r} k_l+p+p')+k_i p''+1}\times \\\nonumber
       &\quad \quad \quad \langle A_1,\cdots,A'',\cdots,A_r;[A,A',A_i] \rangle_{\mathfrak{v}_0\mathfrak{v}_1}.
    \end{align}
 
\end{proposition}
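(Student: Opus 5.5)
The plan is to reduce each identity to the corresponding algebraic identity \eqref{A.1}--\eqref{A.4} by expanding every form in a basis and tracking the Koszul signs produced when scalar forms are moved past one another. Write $A_i=\sum A_i^{a_i}x_{a_i}$, $A=\sum A^a x_a$, $B=\sum B^bY_b$, and so on. By the definition of the canonical extension, together with the definitions of $[A,B]$, $\alpha(B)$ and $[A,A',A'']$, each side becomes a sum of wedge products of scalar forms multiplied by a value of $\langle\cdots;\cdot\rangle_{\mathfrak{v}_0\mathfrak{v}_1}$ on basis elements. The whole proof then has two ingredients: apply the algebraic identity to the basis elements, and reorder the scalar forms, which costs the sign $(-1)^{\deg\omega\deg\eta}$ for each transposition of adjacent forms $\omega,\eta$.

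\textbf{First identity.} The left side is $A_1^{a_1}\wedge\cdots\wedge A_r^{a_r}\wedge A^a\wedge B^b\,\langle x_{a_1},\dots,x_{a_r};[x_a,Y_b]\rangle$. Applying \eqref{A.1} gives the sum $-\sum_i\langle\dots,[x_a,x_{a_i}],\dots;Y_b\rangle$. To recognise the $i$-th term as $\langle A_1,\dots,[A,A_i],\dots,A_r;B\rangle$, the scalar $A^a$ must be moved from its position after $A_r^{a_r}$ to the position immediately before $A_i^{a_i}$. This passes it across $A_i,\dots,A_r$ and produces the sign $(-1)^{p\sum_{l=i}^r k_l}$. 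Combined with the minus sign from \eqref{A.1}, this yields the stated exponent.

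\textbf{Third identity.} We use \eqref{A.4}. Swapping the positions of $A_i^{a_i}$ and $A_j^{a_j}$ in the wedge product moves each of them past the block $A_{i+1},\dots,A_{j-1}$ and past each other. The resulting sign is $(-1)^{(k_i+k_j)\sum_{l=i+1}^{j-1}k_l+k_ik_j}$.

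\textbf{Second identity.} We apply \eqref{A.2} to the basis elements $Y_{b_1},Y_{b_2}$. The scalar $B_1^{b_1}$ sits in slot $i$ and $B_2^{b_2}$ sits at the end, and the two must be exchanged. This means moving $B_1$ past $A_{i+1},\dots,A_r$ and $B_2$, and moving $B_2$ back past $A_{i+1},\dots,A_r$. The sign is $(-1)^{(q_1+q_2)\sum_{l>i}k_l+q_1q_2}$.

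\textbf{Fourth identity.} We apply \eqref{A.3} to $[x_a,x_{a'},x_{a''}]$. The $i$-th resulting term places $x_{a''}$ in slot $i$ and $[x_a,x_{a'},x_{a_i}]$ in the last slot, so the scalar forms must be rearranged accordingly. The original order is $A_1\cdots A_i\cdots A_r\,A\,A'\,A''$; the target order is $A_1\cdots A''\cdots A_r\,A\,A'\,A_i$. Exchanging $A_i$ (degree $k_i$) with $A''$ (degree $p''$) across the intermediate block, whose total degree is $\sum_{l>i}k_l+p+p'$, gives $(-1)^{(k_i+p'')(\sum_{l>i}k_l+p+p')+k_ip''}$. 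The minus sign from \eqref{A.3} supplies the final $+1$ in the exponent.

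\textbf{Main obstacle.} The only delicate point is the sign bookkeeping. One must check consistently that the bracket definitions place the scalar coefficients in the stated order, for example $[A,A_i]=A^a\wedge A_i^{a_i}[x_a,x_{a_i}]$, so that no extra sign is hidden in the definition. One must also note that the basis elements carry no intrinsic grading inside $\langle\cdot\rangle$, so that all signs come from the forms alone. I would carry out the transposition count carefully once, for the fourth identity as the most involved case, and obtain the others as simpler instances of the same counting.
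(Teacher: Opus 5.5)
Your proposal is correct and follows the same route as the paper, whose proof simply states that the identities follow from \eqref{A.1}, \eqref{A.2}, \eqref{A.3} and \eqref{A.4}. Your basis expansion and Koszul-sign bookkeeping supply exactly the details the paper omits, and all four sign exponents you derive agree with the stated ones.
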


\begin{proof}
  We can  get these identities by using \eqref{A.1}, \eqref{A.2}, \eqref{A.3} and \eqref{A.4}.
\end{proof}

\section{Higher descent equations based on the 2-term \texorpdfstring{$L_{\infty}$}{L infty} algebra gauge theory}\label{section 4}
In this section, we introduce a semistrict higher invariant form and demonstrate that it is both closed and gauge invariant. Building on this, we define a family of higher Chern-Simons type characteristic classes and prove that they satisfy the higher descent equations. Given that the definition of higher Chern-Simons type characteristic classes involves integration over a $k$-simplex, we provide the evaluated results of these integrals.

We define the \textbf{$\bm{n}$-th semistrict 2-Chern form}, a higher invariant form  for the 2-gauge field theory based on the 2-term $L_{\infty}$ algebra. The associated curvatures are given in Eq.\eqref{cur}.
\begin{align}\label{invariant}
    \mathcal{P}_{2n+3}=\langle \mathcal{F}^n;\mathcal{H} \rangle_{\mathfrak{v}_0 \mathfrak{v}_1}.
\end{align}
 
\begin{proposition}
    The higher invariant form $\langle \mathcal{F}^n;\mathcal{H} \rangle_{\mathfrak{v}_0 \mathfrak{v}_1}$ is closed.
\end{proposition}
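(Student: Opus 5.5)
We compute $d\langle \mathcal{F}^n;\mathcal{H}\rangle_{\mathfrak{v}_0\mathfrak{v}_1}$ with the graded Leibniz rule and then use the 2-Bianchi identities \eqref{bianchi 1} to replace every $d\mathcal{F}$ and $d\mathcal{H}$ by algebraic expressions. The resulting terms should cancel in groups by Proposition \ref{pro}. Since $\mathcal{F}$ has even degree, no signs arise when $d$ passes through the $\mathcal{F}$ slots. We obtain
\begin{align*}
d\langle \mathcal{F}^n;\mathcal{H}\rangle_{\mathfrak{v}_0\mathfrak{v}_1}=\sum_{i=1}^{n}\langle \mathcal{F},\cdots,d\mathcal{F},\cdots,\mathcal{F};\mathcal{H}\rangle_{\mathfrak{v}_0\mathfrak{v}_1}+\langle \mathcal{F}^n;d\mathcal{H}\rangle_{\mathfrak{v}_0\mathfrak{v}_1}.
\end{align*}
Inserting $d\mathcal{F}=-[A,\mathcal{F}]-\alpha(\mathcal{H})$ and $d\mathcal{H}=-[A,\mathcal{H}]+[\mathcal{F},B]-\frac{1}{2}[A,A,\mathcal{F}]$ produces five families of terms, which we treat in three groups.

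\textbf{Group 1: the $A$-terms.} The terms $-\sum_i\langle\cdots,[A,\mathcal{F}],\cdots;\mathcal{H}\rangle$ should cancel against $-\langle\mathcal{F}^n;[A,\mathcal{H}]\rangle$. We apply the first identity of Proposition \ref{pro} with $p=1$, $k_l=2$ and $B=\mathcal{H}$. The sign $(-1)^{\sum_{l\ge i}k_l+1}=-1$ then gives $\langle\mathcal{F}^n;[A,\mathcal{H}]\rangle=-\sum_i\langle\cdots,[A,\mathcal{F}],\cdots;\mathcal{H}\rangle$, so the two families cancel.

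\textbf{Group 2: the $\alpha(\mathcal{H})$-terms.} The terms $-\sum_i\langle\cdots,\alpha(\mathcal{H}),\cdots;\mathcal{H}\rangle$ should vanish on their own. We apply the second identity of Proposition \ref{pro} with $B_1=B_2=\mathcal{H}$ and $q_1=q_2=3$. The sign is $(-1)^{6(\cdots)+9}=-1$, so each term equals minus itself and hence is zero.

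\textbf{Group 3: the remaining two terms.} These are $\langle\mathcal{F}^n;[\mathcal{F},B]\rangle$ and $-\frac12\langle\mathcal{F}^n;[A,A,\mathcal{F}]\rangle$.
1. For the first, we use the graded symmetry \eqref{grad} to rewrite $[\mathcal{F},B]$ as an action $[\,\cdot\,,\cdot\,]$ of a $\mathfrak{v}_0$-form on a $\mathfrak{v}_1$-form. We then move the bracket onto the $\mathfrak{v}_0$ slots with the first identity of Proposition \ref{pro}. This produces the terms $\langle\cdots,[\mathcal{F},\mathcal{F}],\cdots;B\rangle$, whose coefficient is the antisymmetric part of the structure constants contracted against identical even forms, so each vanishes.
2. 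For the second, we use the fourth identity of Proposition \ref{pro} with $A''=\mathcal{F}$ and $A_i=\mathcal{F}$. This yields $\langle\mathcal{F}^n;[A,A,\mathcal{F}]\rangle=-n\,\langle\mathcal{F}^n;[A,A,\mathcal{F}]\rangle$ up to a sign we must track. Since every $\mathcal{F}$ slot is identical, the expression satisfies $(1+n)X=0$ or $(1-n)X=0$ depending on that sign. The intended outcome is $(1+n)X=0$, so the term vanishes.

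The main obstacle is the careful sign bookkeeping in Group 3, which depends on the precise sign conventions of Proposition \ref{pro} and \eqref{grad}.
1. If the fourth identity instead gives the sign leading to $(1-n)X=0$, I would recheck the direction of the invariance \eqref{A.3}; this relation is designed to mimic the cyclicity that makes $\langle x_1,\dots,x_r;[x,y,x_i]\rangle$ sum to zero.
2. I would also check whether \eqref{A.3} together with the symmetry \eqref{A.4}, applied in the slot occupied by $\mathcal{F}$, forces a total symmetric contraction to vanish.
3. For the $[\mathcal{F},B]$ term, I would likewise check whether closedness instead needs a cancellation between it and the $\alpha$-terms via \eqref{A.2}.

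For that third check, the route is to rewrite $\langle\mathcal{F}^n;[\mathcal{F},B]\rangle$ using \eqref{A.1} as $-n\langle[\mathcal{F},\mathcal{F}],\mathcal{F}^{n-1};B\rangle$ and to confirm that this vanishes by antisymmetry of the Lie bracket on identical 2-forms.
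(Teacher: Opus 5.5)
Your proposal is correct and follows the paper's own argument. You apply the Leibniz rule, substitute the 2-Bianchi identities, cancel the $[A,\mathcal{F}]$ terms against the $[A,\mathcal{H}]$ term, and show via Proposition~\ref{pro} that the $\alpha(\mathcal{H})$, $[\mathcal{F},B]$ and $[A,A,\mathcal{F}]$ contributions each vanish separately. The sign you leave open in the last step is not actually ambiguous. Because $\mathcal{F}$ is an even form, exchanging it with an $\mathcal{F}$ in a $\mathfrak{v}_0$ slot produces no Koszul sign: with $k_i=p''=2$ the exponent in the fourth identity of Proposition~\ref{pro} is $4(\cdots)+4+1$, which is odd. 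So \eqref{A.3} gives exactly $X=-nX$, hence $(1+n)X=0$, and your fallback checks are unnecessary.
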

\begin{proof}
   Using Proposition \eqref{pro}, we obtain the following identities:
    \begin{align}\label{R}
        \langle [A,\mathcal{F}];\mathcal{H}\rangle_{\mathfrak{v}_0 \mathfrak{v}_1}&=-\langle \mathcal{F};[A,\mathcal{H}]\rangle_{\mathfrak{v}_0 \mathfrak{v}_1},\\\nonumber
        \langle \mathcal{F};[\mathcal{F},B]\rangle_{\mathfrak{v}_0 \mathfrak{v}_1}&=-\langle [\mathcal{F},\mathcal{F}];B\rangle_{\mathfrak{v}_0 \mathfrak{v}_1}=0,\\\nonumber
        \langle \mathcal{F};[A,A,\mathcal{F}]\rangle_{\mathfrak{v}_0 \mathfrak{v}_1}&=-\langle \mathcal{F};[A,A,\mathcal{F}]\rangle_{\mathfrak{v}_0 \mathfrak{v}_1}=0,\\\nonumber
        \langle \alpha(\mathcal{H});\mathcal{H}\rangle_{\mathfrak{v}_0 \mathfrak{v}_1}&=-\langle \alpha(\mathcal{H});\mathcal{H}\rangle_{\mathfrak{v}_0 \mathfrak{v}_1}=0.
    \end{align}
Then, using the 2-Bianchi identity, we compute
    \begin{align}
        d\langle \mathcal{F}^n,\mathcal{H} \rangle_{\mathfrak{v}_0 \mathfrak{v}_1}
        &=n\langle d\mathcal{F},\mathcal{F}^{n-1};\mathcal{H} \rangle_{\mathfrak{v}_0 \mathfrak{v}_1}+\langle \mathcal{F}^{n};d\mathcal{H} \rangle_{\mathfrak{v}_0 \mathfrak{v}_1}\\\nonumber
        &=-n\langle [A,\mathcal{F}]+\alpha(\mathcal{H}),\mathcal{F}^{n-1},\mathcal{H} \rangle_{\mathfrak{v}_0 \mathfrak{v}_1}\\\nonumber
        &\quad-\langle \mathcal{F}^{n};[A ,\mathcal{H}]-[\mathcal{F} , B]+\frac{1}{2}[A,A,\mathcal{F}] \rangle_{\mathfrak{v}_0 \mathfrak{v}_1}\\\nonumber
        &=0.
    \end{align}
\end{proof}

\begin{proposition}\label{4.2}
   The higher invariant form $\langle \mathcal{F}^n,\mathcal{H} \rangle_{\mathfrak{v}_0 \mathfrak{v}_1}$ is gauge invariant under the 2-term $L_{\infty}$ algebra  1-gauge transformation. 
\end{proposition}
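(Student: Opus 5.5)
The argument substitutes the transformed curvatures $\mathcal{F}^{g}=g_0(\mathcal{F})$ and $\mathcal{H}^{g}=g_1(\mathcal{H}-\tau_g(\mathcal{F}))+g_2(A-\sigma_g,\mathcal{F})$ into $\langle \mathcal{F}^n;\mathcal{H}\rangle_{\mathfrak{v}_0\mathfrak{v}_1}$. Multilinearity then splits the result into three terms:
\begin{align*}
\langle (\mathcal{F}^{g})^n;\mathcal{H}^{g}\rangle_{\mathfrak{v}_0\mathfrak{v}_1}
&=\langle g_0(\mathcal{F})^n;g_1(\mathcal{H})\rangle_{\mathfrak{v}_0\mathfrak{v}_1}
-\langle g_0(\mathcal{F})^n;g_1(\tau_g(\mathcal{F}))\rangle_{\mathfrak{v}_0\mathfrak{v}_1}\\
&\quad+\langle g_0(\mathcal{F})^n;g_2(A-\sigma_g,\mathcal{F})\rangle_{\mathfrak{v}_0\mathfrak{v}_1}.
\end{align*}
The first term reduces to $\langle \mathcal{F}^n;\mathcal{H}\rangle_{\mathfrak{v}_0\mathfrak{v}_1}$ by the first invariance condition in \eqref{invar}, extended to forms by the canonical extension. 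Since $g_0,g_1$ act pointwise on the algebra factors and $\mathcal{F}$ has even degree, no signs arise. The proof then reduces to showing that the second and third terms vanish separately.

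For the second term, I would use the first invariance condition to strip off $g_0,g_1$, leaving $\langle \mathcal{F}^n;\tau_g(\mathcal{F})\rangle_{\mathfrak{v}_0\mathfrak{v}_1}$. Here $\tau_g$ is a $1$-form valued in $\mathfrak{aut}_1(\mathfrak{v})$, so its component is extended like the other brackets, $\tau_g(\mathcal{F})=\tau_g^{a}\wedge\mathcal{F}^b\,\tau_a(x_b)$. Applying the third invariance condition, with $y$ the argument of $\tau_g$ and the $x_i$ all equal to $\mathcal{F}$, gives
\begin{align*}
\langle \mathcal{F}^n;\tau_g(\mathcal{F})\rangle_{\mathfrak{v}_0\mathfrak{v}_1}=-n\langle \mathcal{F}^n;\tau_g(\mathcal{F})\rangle_{\mathfrak{v}_0\mathfrak{v}_1}.
\end{align*}
The step is to swap the $\mathcal{F}$ inside $\tau_g$ with one of the $n$ outer copies. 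Because all these $\mathcal{F}$ are even $2$-forms, the swap costs no sign, by the graded symmetry in Proposition \ref{pro}. Since $n+1\neq 0$, the term vanishes. This is the same mechanism used in \eqref{R} for $\langle \mathcal{F};[A,A,\mathcal{F}]\rangle_{\mathfrak{v}_0\mathfrak{v}_1}=0$.

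For the third term, I would apply the second invariance condition with $y=A-\sigma_g$ and $z=\mathcal{F}$. Each summand on the right exchanges $z=\mathcal{F}$ with some $x_i=\mathcal{F}$, so after symmetrization the summand equals the original expression. This gives $\langle g_0(\mathcal{F})^n;g_2(A-\sigma_g,\mathcal{F})\rangle_{\mathfrak{v}_0\mathfrak{v}_1}=-n\langle g_0(\mathcal{F})^n;g_2(A-\sigma_g,\mathcal{F})\rangle_{\mathfrak{v}_0\mathfrak{v}_1}$, and hence the term is zero. Adding the three pieces gives $\langle (\mathcal{F}^{g})^n;\mathcal{H}^{g}\rangle_{\mathfrak{v}_0\mathfrak{v}_1}=\langle \mathcal{F}^n;\mathcal{H}\rangle_{\mathfrak{v}_0\mathfrak{v}_1}$.

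I expect the main obstacle to be the sign bookkeeping when the algebraic invariance identities are lifted to forms. The odd-degree factors $\tau_g$ and $A-\sigma_g$ sit in front of the moved $\mathcal{F}$. One must check, as in Proposition \ref{pro}, that moving the $2$-forms past these $1$-form coefficients produces a global $+1$, so that the relation really reads $X=-nX$ and not $X=+nX$. Because every transposed object is an even $2$-form, the Koszul signs in Proposition \ref{pro} reduce to $+1$. I would verify this explicitly by writing the components $A^a\wedge\mathcal{F}^b\wedge\cdots$, which also yields the claim via the canonical extension.
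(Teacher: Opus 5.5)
Your proposal is correct and follows the same route as the paper: you substitute $\mathcal{F}^g,\mathcal{H}^g$, strip $g_0,g_1$ with the first invariance condition, and kill the $\tau_g$ and $g_2$ terms with the remaining two invariance conditions. The paper compresses that last step into ``using the $\mathfrak{v}$-invariance,'' whereas you spell out the mechanism it implicitly needs: relabel the even $2$-form coefficients to get $X=-nX$, and hence $X=0$.
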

\begin{proof}
    \begin{align}
        \langle (\mathcal{F}^{g})^{n};\mathcal{H}^{g} \rangle_{\mathfrak{v}_0 \mathfrak{v}_1}
        &=\langle (g_0(\mathcal{F}))^{n};g_1(\mathcal{H}-\tau_{g}(\mathcal{F}))+g_2(A-\sigma_g,\mathcal{F}) \rangle_{\mathfrak{v}_0 \mathfrak{v}_1}\\\nonumber
        &=\langle (\mathcal{F}))^{n};\mathcal{H}-\tau_{g}(\mathcal{F})\rangle_{\mathfrak{v}_0 \mathfrak{v}_1}+\langle (g_0(\mathcal{F}))^{n};g_2(A-\sigma_g,\mathcal{F}) \rangle_{\mathfrak{v}_0 \mathfrak{v}_1}\\\nonumber
        &=\langle \mathcal{F}^{n};\mathcal{H}\rangle_{\mathfrak{v}_0 \mathfrak{v}_1},
    \end{align}
 using the $\mathfrak{v}$-invariance of $\langle \cdots;\cdot \rangle_{\mathfrak{v}_0 \mathfrak{v}_1}$.
\end{proof}
 
Suppose $n=1$, take an action of the form
\begin{align}
    S=\int_M \langle \mathcal{F};\mathcal{H}\rangle_{\mathfrak{v}_0 \mathfrak{v}_1}.
\end{align}
By varying the action with respect to the gauge fields, 
\begin{align}
    \delta S&=\int_M \langle \delta\mathcal{F};\mathcal{H}\rangle_{\mathfrak{v}_0 \mathfrak{v}_1}+\langle \mathcal{F};\delta\mathcal{H}\rangle_{\mathfrak{v}_0 \mathfrak{v}_1}\\\nonumber
    &=\int_M \langle d\delta A +[A,\delta A] -\alpha(\delta B); \mathcal{H} \rangle_{\mathfrak{v}_0 \mathfrak{v}_1}+\langle \mathcal{F};d\delta B +[\delta A,B]+[A,\delta B] -\frac{1}{2}[\delta A,B] \rangle_{\mathfrak{v}_0 \mathfrak{v}_1}\\\nonumber
    &=\int_M \langle \delta A ; d\mathcal{H}+[A ,\mathcal{H}]-[\mathcal{F} , B]+\frac{1}{2}[A,A,\mathcal{F}] \rangle_{\mathfrak{v}_0 \mathfrak{v}_1}-\langle d\mathcal{F}+[A , \mathcal{F}]+\alpha({\mathcal{H}})
    ;\delta B  \rangle_{\mathfrak{v}_0 \mathfrak{v}_1},
    \end{align}
we obtain the field equations, which are the 2-Bianchi identity \eqref{bianchi 1}.

Let $\{(A_i,B_i)(i=0,\cdots,k)\}$ be $k+1$  $\mathfrak{v}$-connections on the principal $2$-bundle
$E$ over the $(2n+3)$-dimensional manifold $M$, we define the interpolations among them as follows
\begin{align}
    A_{0,t_1 \cdots t_k}=A_0+t_1 \eta^{1,0}+\cdots+t_k \eta^{k,0},\quad \eta^{i,0}=A_i -A_0,
\end{align}
\begin{align}
    B_{0,t_1 \cdots t_k}=B_0+t_1 \phi^{1,0}+\cdots+t_k \phi^{k,0}, \quad \phi^{i,0}=B_i -B_0,
\end{align}
and their corresponding curvatures are denoted by $\mathcal{F}_{0,t_1 \cdots t_k},\mathcal{H}_{0,t_1 \cdots t_k}.$

It is straightforward to show that
\begin{align}\label{bianchi 2}
    \frac{\partial  \mathcal{F}_{0,t_1 \cdots t_k}}{\partial t_i}&=d\eta^{i,0}+[A_{0,t_1 \cdots t_k} , \eta^{i,0}]-\alpha(\phi^{i,0}),\\
    \frac{\partial  \mathcal{H}_{0,t_1 \cdots t_k}}{\partial t_i}&=d\phi^{i,0}+[A_{0,t_1 \cdots t_k} ,\phi^{i,0}]+[\eta^{i,0} , B_{0,t_1 \cdots t_k}]-\frac{1}{2}[\eta^{i,0},A_{0,t_1 \cdots t_k},A_{0,t_1 \cdots t_k}].
\end{align}

In the balanced 2-term $L_{\infty}$ algebra, we extend Chern-Simons type characteristic classes \eqref{Q-p} in the Ref. \cite{han1985chern}. Specifically, given an invariant polynomial $\langle \cdots; \cdot \rangle_{\mathfrak{v}_0 \mathfrak{v}_1}$ of degree $r$, We define a family of \textbf{higher Chern-Simons type characteristic classes}
\begin{align}\label{theorem}
    & \mathcal{Q}_r^{(k)}((A_0,B_0),\cdots,(A_k,B_k);\Delta_k)=\frac{r!}{(r-k)!} \int_{\Delta_{k}} dt_1 \wedge \cdots \wedge dt_k \langle\eta^{1,0},\cdots,\eta^{k,0},\mathcal{F}_{0,t_1 \cdots t_k}^{r-k};\mathcal{H}_{0,t_1 \cdots t_k} \rangle_{\mathfrak{v}_0\mathfrak{v}_1}\notag\\
    &+\frac{r!}{(r-k+1)!} (-1)^{k}\int_{\Delta_{k}} dt_1 \wedge \cdots \wedge dt_k\sum\limits_{i=1}^{k} (-1)^{i} \langle\eta^{1,0},\cdots,\hat{\eta}^{i,0},\cdots,\eta^{k,0},\mathcal{F}_{0,t_1 \cdots t_k}^{r-k+1};\phi^{i,0} \rangle_{\mathfrak{v}_0\mathfrak{v}_1},
\end{align}
where $k$-simplex set $\Delta_k$ consisting of $  0 \leq t_i \leq 1 (i=1,\cdots k),\sum \limits_{i=1}^{k} t_i \leq 1$. It is a $(2r-k+3)$-form with $0 \leq k \leq r.$ 

The most important property of the higher Chern-Simons type characteristic $\mathcal{Q}_r^{(k)}$ is described by the following theorem. 
\begin{theorem}\label{equation}
    If the higher Chern-Simons type characteristic classes $\mathcal{Q}_r^{(k)}$ are defined as \eqref{theorem}  then there exists a relation between a $k$-th $\mathcal{Q}$-polynomial and the $(k-l)$-th $\mathcal{Q}$-polynomials as follows
     \begin{align}\label{descent}
        d\mathcal{Q}_r^{(k)}((A_0,B_0),\cdots,(A_k,B_k);\Delta_k)=\tilde{\Delta} \mathcal{Q}_r^{(k-1)}((A_0,B_0),\cdots,(A_k,B_k);\partial \Delta_k),
    \end{align}
Here $\tilde{\Delta}$ acts on a polynomial $\mathcal{R}^{(k)}(A_0,B_0),\cdots,(A_{k+1},B_{k+1})$ by 
\begin{align}
    (\tilde{\Delta} \mathcal{R}^{(k)})((A_0,B_0),\cdots,(A_{k+1},B_{k+1}))
    &=\sum\limits_{i=0}^{k} (-1)^i \mathcal{R}^{(k)}((A_0,B_0),\cdots,
    (\hat{A^i},\hat{B^i}),\cdots,(A_{k+1},B_{k+1})),
\end{align}
where the $\hat{A^i},\hat{B^i}$ indicates that $A_i,B_i$ is deleted from the sequence $(A_0,B_0),\cdots,(A_{k+1},B_{k+1})$.
\end{theorem}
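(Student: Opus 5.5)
The plan is to follow the strategy of Guo et al.\ for the ordinary descent equations, but to work on the simplex directly. I would apply Stokes' theorem in the auxiliary variables $t_1,\dots,t_k$ rather than expanding the integrals explicitly.

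The first step is to introduce the total differential $\delta=d+d_t$ on $M\times\Delta_k$, where $d_t=\sum_i dt_i\,\partial/\partial t_i$. I would then treat the interpolated pair $(\mathbb{A},\mathbb{B})=(A_{0,t_1\cdots t_k},B_{0,t_1\cdots t_k})$ as a $\mathfrak{v}$-connection on $M\times\Delta_k$. Its total curvatures are
\begin{align*}
\tilde{\mathcal{F}}&=\mathcal{F}_{0,t}+\sum_i dt_i\,\eta^{i,0},\\
\tilde{\mathcal{H}}&=\mathcal{H}_{0,t}+\sum_i dt_i\,\phi^{i,0}.
\end{align*}
These follow directly from \eqref{cur}, since $d_t\mathbb{A}=\sum dt_i\eta^{i,0}$ and $d_t\mathbb{B}=\sum dt_i\phi^{i,0}$. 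By construction, $(\tilde{\mathcal{F}},\tilde{\mathcal{H}})$ satisfies the 2-Bianchi identity \eqref{bianchi 1} with $d$ replaced by $\delta$; equivalently, \eqref{bianchi 2} supplies exactly the $dt_i$-components of that identity.

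By the closedness proposition, now applied on $M\times\Delta_k$, the form $\langle\tilde{\mathcal{F}}^r;\tilde{\mathcal{H}}\rangle_{\mathfrak{v}_0\mathfrak{v}_1}$ is $\delta$-closed. That proof used only Proposition \ref{pro} and the Bianchi identity, so it carries over unchanged.

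The second step is to expand $\langle\tilde{\mathcal{F}}^r;\tilde{\mathcal{H}}\rangle$ in powers of $dt$ and extract the component of degree $k$ in the $dt_i$. There are two sources of $k$ factors $dt$:
\begin{itemize}
\item all $k$ come from the $\tilde{\mathcal{F}}$'s, giving the multinomial factor $r!/(r-k)!$ times $\langle\eta^{1,0},\dots,\eta^{k,0},\mathcal{F}^{r-k};\mathcal{H}\rangle$;
\item $k-1$ come from the $\tilde{\mathcal{F}}$'s and one, $dt_i\,\phi^{i,0}$, comes from $\tilde{\mathcal{H}}$, giving $r!/(r-k+1)!$ times $\langle\dots\hat\eta^{i,0}\dots,\mathcal{F}^{r-k+1};\phi^{i,0}\rangle$.
\end{itemize}
After commuting the $dt_i$ to the front with the sign rules of Proposition \ref{pro}, this reproduces the integrand of \eqref{theorem}, so $\mathcal{Q}^{(k)}_r=\int_{\Delta_k}\langle\tilde{\mathcal{F}}^r;\tilde{\mathcal{H}}\rangle$. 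The commutation also produces the signs $(-1)^k(-1)^i$; the degree-1 fields $\eta$ anticommute with $dt$ and the degree-2 $\phi$ commutes.

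Next I take the component of $\delta\langle\tilde{\mathcal{F}}^r;\tilde{\mathcal{H}}\rangle=0$ of $t$-degree $k$, which reads $d\,(\cdot)^{[k]}\pm d_t(\cdot)^{[k-1]}=0$. Integrating over $\Delta_k$ and using Stokes' theorem gives $d\mathcal{Q}^{(k)}_r=\pm\int_{\partial\Delta_k}(\cdot)^{[k-1]}$. The boundary $\partial\Delta_k$ consists of the faces $t_i=0$ for $i=1,\dots,k$ and the face $\sum t_i=1$. On the face $t_i=0$, the restricted connection is precisely the interpolation among the connections with $(A_i,B_i)$ omitted, so the restricted form is $\mathcal{Q}^{(k-1)}_r$ of those connections. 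On the face $\sum t_i=1$, I reparametrize using $A_1$ as base point, writing $\eta^{j,0}-\eta^{1,0}=\eta^{j,1}$ and likewise for $\phi$; the multilinearity and the antisymmetry in the $\eta$'s then turn this term into $\mathcal{Q}^{(k-1)}_r$ with $(A_0,B_0)$ omitted. The orientations of the faces supply the alternating signs, which yields $\tilde\Delta\mathcal{Q}^{(k-1)}_r$.

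The main obstacle is sign bookkeeping. I expect the hardest points to be two. First, I must check that the relative sign and the factor $(-1)^k$ between the two pieces of \eqref{theorem} match exactly the $dt$-expansion; this decides whether the cross terms containing $\phi$ assemble correctly. Second, I must check that on the face $\sum t_i=1$ the $\phi$-terms re-base consistently, with $\phi^{j,0}\to\phi^{j,1}$, through the same antisymmetry argument used for the $\eta$'s. I would first verify both points for $k=1,2$, which should recover the higher Chern--Weil theorem and the triangle equation, and then fix the general signs by induction on the position of $dt_i$.
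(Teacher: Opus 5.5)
Your argument is correct, but it is organized differently from the paper's proof.

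\textbf{The paper's route.} The paper never leaves $M$. It differentiates the two integrands of $\mathcal{Q}^{(k)}_r$ term by term, using the 2-Bianchi identity and the formulas for $\partial\mathcal{F}_{0,t_1\cdots t_k}/\partial t_i$ and $\partial\mathcal{H}_{0,t_1\cdots t_k}/\partial t_i$. It cancels by hand the $\alpha(\phi^{i,0})$, $[\eta^{i,0},B]$ and $[\eta^{i,0},A,A]$ cross terms between the $\mathcal{H}$-piece and the $\phi$-piece, and recognizes what remains as $\sum_i\pm\partial/\partial t_i$ of lower integrands. It then applies Stokes on $\Delta_k$ and checks separately that $\mathcal{F}_{0,t_1\cdots t_k}=\mathcal{F}_{1,s_2\cdots s_k}$ and $\mathcal{H}_{0,t_1\cdots t_k}=\mathcal{H}_{1,s_2\cdots s_k}$ on the face $\sum_i t_i=1$.

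\textbf{Your route.} Your extended connection on $M\times\Delta_k$ packages all of those cancellations into the single identity $\delta\langle\tilde{\mathcal{F}}^r;\tilde{\mathcal{H}}\rangle_{\mathfrak{v}_0\mathfrak{v}_1}=0$, which is just the closedness proposition on a larger manifold. It also explains why $\mathcal{Q}^{(k)}_r$ has exactly two pieces, with coefficients $r!/(r-k)!$ and $r!/(r-k+1)!$ and relative sign $(-1)^k(-1)^i$. Your expansion gets all of this right.

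Your second worry is moot. Pulling $(\mathbb{A},\mathbb{B})$ back to the face $\sum_i t_i=1$ gives precisely the interpolation based at $(A_1,B_1)$, and curvatures commute with pullback, so no re-basing argument via antisymmetry is needed.

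\textbf{One sign correction.} There is an overall normalization you missed. In the paper's convention, with $dt_1\wedge\cdots\wedge dt_k$ written on the left, the $dt$-degree-$k$ component $\Omega^{[k]}$ equals $(-1)^{k(k-1)/2}\,dt_1\wedge\cdots\wedge dt_k$ wedged with the integrand. Hence $\int_{\Delta_k}\langle\tilde{\mathcal{F}}^r;\tilde{\mathcal{H}}\rangle_{\mathfrak{v}_0\mathfrak{v}_1}=(-1)^{k(k-1)/2}\mathcal{Q}^{(k)}_r$, not $\mathcal{Q}^{(k)}_r$. This is harmless, because it combines with the other signs:
\begin{itemize}
\item $(-1)^k$ from moving $d$ past $dt_1\wedge\cdots\wedge dt_k$;
\item $-1$ from $d\Omega^{[k]}=-d_t\Omega^{[k-1]}$;
\item $(-1)^{(k-1)(k-2)/2}$ on each face.
\end{itemize}
The total exponent is $k^2-k+2$, which is even. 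With face orientations $(-1)^i$ on $t_i=0$ and $+1$ in the coordinates $s_2,\dots,s_k$ on $\sum_i t_i=1$, you get exactly $\sum_{i=0}^k(-1)^i\mathcal{Q}^{(k-1)}_r$.

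\textbf{What each approach buys.} The paper's route is self-contained on $M$ and needs no fiber-integration or orientation conventions. Yours is shorter, explains the shape of the definition, and avoids the heavy index bookkeeping of the direct computation. It is close in spirit to the extended Cartan homotopy approach the paper mentions as an alternative.
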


\begin{proof}
     In order to prove the equation, we first calculate
   \begin{align}
       &d\langle\eta^{1,0},\cdots,\eta^{k,0},\mathcal{F}_{0,t_1 \cdots t_k}^{r-k};\mathcal{H}_{0,t_1 \cdots t_k}\rangle_{\mathfrak{v}_0\mathfrak{v}_1}\notag\\
       &=\sum\limits_{i=1}^{k} (-1)^{i-1}\langle\eta^{1,0},\cdots,\frac{\partial \mathcal{F}_{0,t_1 \cdots t_k}}{\partial t_i}-[A_{0,t_1 \cdots t_k} , \eta^{i,0}]+\alpha(\phi^{i,0}),\cdots,\eta^{k,0},\mathcal{F}_{0,t_1 \cdots t_k}^{r-k};\mathcal{H}_{0,t_1 \cdots t_k}\rangle_{\mathfrak{v}_0\mathfrak{v}_1}\notag\\
       &\quad+(-1)^{k+1}\sum\limits_{j=0}^{r-k-1}\langle\eta^{1,0},\cdots,\eta^{k,0},\mathcal{F}_{0,t_1 \cdots t_k}^{j},[A,\mathcal{F}_{0,t_1 \cdots t_k}]+\alpha(\mathcal{H}_{0,t_1 \cdots t_k}),\mathcal{F}_{0,t_1 \cdots t_k}^{r-k-1-j};\mathcal{H}_{0,t_1 \cdots t_k}\rangle_{\mathfrak{v}_0\mathfrak{v}_1}\notag\\
       &\quad+(-1)^{k}\langle\eta^{1,0},\cdots,\eta^{k,0},\mathcal{F}_{0,t_1 \cdots t_k}^{r-k};-[A ,\mathcal{H}_{0,t_1 \cdots t_k}]+[\mathcal{F}_{0,t_1 \cdots t_k} , B_{0,t_1 \cdots t_k}]-\frac{1}{2}[A_{0,t_1 \cdots t_k},A_{0,t_1 \cdots t_k},\mathcal{F}_{0,t_1 \cdots t_k}]\rangle_{\mathfrak{v}_0\mathfrak{v}_1}\notag\\ 
       &=\sum\limits_{i=1}^{k} (-1)^{i-1} \partial /\partial t_i
       \langle\eta^{1,0},\cdots,\hat{\eta}^{i,0},\cdots,\eta^{k,0},\mathcal{F}_{0,t_1 \cdots t_k}^{r-k+1};\mathcal{H}_{0,t_1 \cdots t_k}\rangle_{\mathfrak{v}_0\mathfrak{v}_1}(r-k+1)^{-1}\notag\\
       &\quad-\sum\limits_{i=1}^{k} (-1)^{i-1} \langle\eta^{1,0},\cdots,\hat{\eta}^{i,0},\cdots,\eta^{k,0},\mathcal{F}_{0,t_1 \cdots t_k}^{r-k+1};\frac{\partial\mathcal{H}_{0,t_1 \cdots t_k}}{\partial t_i}\rangle_{\mathfrak{v}_0\mathfrak{v}_1}(r-k+1)^{-1}\notag\\
       &\quad+\sum\limits_{i=1}^{k} (-1)^{i-1}\langle\eta^{1,0},\cdots,\alpha(\phi^{i,0}),\cdots,\eta^{k,0},\mathcal{F}_{0,t_1 \cdots t_k}^{r-k};\mathcal{H}_{0,t_1 \cdots t_k}\rangle_{\mathfrak{v}_0\mathfrak{v}_1}\notag\\
       &\quad+(-1)^{k}\langle\eta^{1,0},\cdots,\eta^{k,0},\mathcal{F}_{0,t_1 \cdots t_k}^{r-k};[\mathcal{F}_{0,t_1 \cdots t_k} , B_{0,t_1 \cdots t_k}]-\frac{1}{2}[A_{0,t_1 \cdots t_k},A_{0,t_1 \cdots t_k},\mathcal{F}_{0,t_1 \cdots t_k}]\rangle_{\mathfrak{v}_0\mathfrak{v}_1}.
   \end{align}
   Next, we compute 
   \begin{align}
       &d\langle\eta^{1,0},\cdots,\hat{\eta}^{i,0},\cdots,\eta^{k,0},\mathcal{F}_{0,t_1 \cdots t_k}^{r-k+1};\phi^{i,0} \rangle_{\mathfrak{v}_0\mathfrak{v}_1}\notag\\
       &=\sum\limits_{l=1}^{i-1} (-1)^{l-1}\langle\eta^{1,0},\cdots,\frac{\partial \mathcal{F}_{0,t_1 \cdots t_k}}{\partial t_l}-[A_{0,t_1 \cdots t_k} , \eta^{i,0}]+\alpha(\phi^{l,0}),\cdots,\hat{\eta}^{i,0},\cdots,\eta^{k,0},\mathcal{F}_{0,t_1 \cdots t_k}^{r-k+1};\phi^{i,0}\rangle_{\mathfrak{v}_0\mathfrak{v}_1}\notag\\
       &\quad+\sum\limits_{l=i+1}^{k} (-1)^{l}\langle\eta^{1,0},\cdots,\hat{\eta}^{i,0},\cdots,\frac{\partial \mathcal{F}_{0,t_1 \cdots t_k}}{\partial t_l}-[A_{0,t_1 \cdots t_k} , \eta^{i,0}]+\alpha(\phi^{l,0}),\cdots,\eta^{k,0},\mathcal{F}_{0,t_1 \cdots t_k}^{r-k+1};\phi^{i,0}\rangle_{\mathfrak{v}_0\mathfrak{v}_1}\notag\\
       &\quad+(-1)^{k}\sum\limits_{j=0}^{r-k}\langle\eta^{1,0},\cdots,\hat{\eta}^{i,0},\cdots,\eta^{k,0},\mathcal{F}_{0,t_1 \cdots t_k}^{j},[A_{0,t_1 \cdots t_k} , F_{0,t_1 \cdots t_k}]+\alpha(\mathcal{H}_{0,t_1 \cdots t_k}),\mathcal{F}_{0,t_1 \cdots t_k}^{r-k-j};\phi^{i,0}\rangle_{\mathfrak{v}_0\mathfrak{v}_1}\notag\\
       &\quad+(-1)^{k-1}\langle\eta^{1,0},\cdots,\hat{\eta}^{i,0},\cdots,\eta^{k,0},\mathcal{F}_{0,t_1 \cdots t_k}^{r-k+1};\frac{\partial  \mathcal{H}_{0,t_1 \cdots t_k}}{\partial t_i}-[\eta^{i,0},B_{0,t_1 \cdots t_k}]-[ A_{0,t_1 \cdots t_k},\phi^{i,0}]\\\notag
       &\quad+\frac{1}{2}[\eta^{i,0},A_{0,t_1 \cdots t_k},A_{0,t_1 \cdots t_k}]\rangle_{\mathfrak{v}_0\mathfrak{v}_1}\notag\\
       &=\sum\limits_{l=1}^{i-1} (-1)^{l-1} \partial /\partial t_l \langle\eta^{1,0},\cdots, \mathcal{F}_{0,t_1 \cdots t_k},\cdots,\hat{\eta}^{i,0},\cdots,\eta^{k,0},\mathcal{F}_{0,t_1 \cdots t_k}^{r-k+2};\phi^{i,0}\rangle_{\mathfrak{v}_0\mathfrak{v}_1}(r-k+2)^{-1}\notag\\
       &\quad+\sum\limits_{l=i+1}^{k} (-1)^{l} \partial /\partial t_i\langle\eta^{1,0},\cdots,\hat{\eta}^{i,0},\cdots, \mathcal{F}_{0,t_1 \cdots t_k},\cdots,\eta^{k,0},\mathcal{F}_{0,t_1 \cdots t_k}^{r-k+1};\phi^{i,0}\rangle_{\mathfrak{v}_0\mathfrak{v}_1}(r-k+2)^{-1}\notag\\
       &\quad+(-1)^{k}(r-k+1)\langle\eta^{1,0},\cdots,\hat{\eta}^{i,0},\cdots,\eta^{k,0},\mathcal{F}_{0,t_1 \cdots t_k}^{r-k},\alpha(\phi^{i,0});\mathcal{H}_{0,t_1 \cdots t_k}\rangle_{\mathfrak{v}_0\mathfrak{v}_1}\notag\\
       &\quad+(-1)^{k-1}\langle\eta^{1,0},\cdots,\hat{\eta}^{i,0},\cdots,\eta^{k,0},\mathcal{F}_{0,t_1 \cdots t_k}^{r-k+1};\frac{\partial  \mathcal{H}_{0,t_1 \cdots t_k}}{\partial t_i}-[\eta^{i,0} , B_{0,t_1 \cdots t_k}]+\frac{1}{2}[\eta^{i,0},A_{0,t_1 \cdots t_k},A_{0,t_1 \cdots t_k}]\rangle_{\mathfrak{v}_0\mathfrak{v}_1},
   \end{align}
   by using Eq. \eqref{R} and 2-Bianchi identity \eqref{bianchi 1}. Combining the results from the above, we have the total expression
   \begin{align}
       &d\mathcal{Q}_r^{(k)}((A_0,B_0),\cdots,(A_k,B_k);\Delta_k)\notag\\
       &=\frac{r!}{(r-k+1)!}\sum\limits_{i=1}^{k} (-1)^{i-1}\int_{\Delta_{k}} dt_1 \wedge \cdots \wedge dt_k \partial /\partial t_i
       \langle\eta^{1,0},\cdots,\hat{\eta}^{i,0},\cdots,\eta^{k,0},\mathcal{F}_{0,t_1 \cdots t_k}^{r-k+1};\mathcal{H}_{0,t_1 \cdots t_k}\rangle_{\mathfrak{v}_0\mathfrak{v}_1}\notag\\
       &\quad+\frac{r!}{(r-k+2)!}\sum\limits_{i=1}^{k} (-1)^{i+k}\int_{\Delta_{k}} dt_1 \wedge \cdots \wedge dt_k \displaystyle\{ \sum\limits_{l=1}^{i-1} (-1)^{l-1} \partial /\partial t_l
       \langle\eta^{1,0},\cdots,\hat{\eta}^{l,0},\cdots,\hat{\eta}^{i,0},\cdots,\eta^{k,0},\mathcal{F}_{0,t_1 \cdots t_k}^{r-k+2};\phi^{i,0}\rangle_{\mathfrak{v}_0\mathfrak{v}_1}\notag\\
       &\quad+\sum\limits_{l=i+1}^{k} (-1)^{l} \partial /\partial t_l
       \langle\eta^{1,0},\cdots,\hat{\eta}^{i,0},\cdots,\hat{\eta}^{l,0},\cdots,\eta^{k,0},\mathcal{F}_{0,t_1 \cdots t_k}^{r-k+2};\phi^{i,0}\rangle_{\mathfrak{v}_0\mathfrak{v}_1}\displaystyle\}.
   \end{align}  
   This combined expression can be rewritten by noting that the first term
   \begin{align}\label{cup}
       &\sum\limits_{i=1}^{k} (-1)^{i-1}\int_{\Delta_{k}} dt_1 \wedge  \cdots \wedge dt_k \partial /\partial t_i
       \langle\eta^{1,0},\cdots,\hat{\eta}^{i,0},\cdots,\eta^{k,0},\mathcal{F}_{0,t_1 \cdots t_k}^{r-k+1};\mathcal{H}_{0,t_1 \cdots t_k}\rangle_{\mathfrak{v}_0\mathfrak{v}_1}\notag\\
       &=\sum\limits_{i=1}^{k}\int_{\Delta_{k}} d\{ dt_1 \wedge \cdots\wedge d\hat{t}_i \wedge \cdots \wedge dt_k
       \langle\eta^{1,0},\cdots,\hat{\eta}^{i,0},\cdots,\eta^{k,0},\mathcal{F}_{0,t_1 \cdots t_k}^{r-k+1};\mathcal{H}_{0,t_1 \cdots t_k}\rangle_{\mathfrak{v}_0\mathfrak{v}_1}\}\notag\\
       &=\sum\limits_{i=1}^{k}\int_{\partial\Delta_{k}} dt_1 \wedge \cdots\wedge d\hat{t}_i \wedge \cdots \wedge dt_k
       \langle\eta^{1,0},\cdots,\hat{\eta}^{i,0},\cdots,\eta^{k,0},\mathcal{F}_{0,t_1 \cdots t_k}^{r-k+1};\mathcal{H}_{0,t_1 \cdots t_k}\rangle_{\mathfrak{v}_0\mathfrak{v}_1}\notag\\
       &=\sum\limits_{i=1}^{k} (-1)^i\int_{\Delta_{k-1}(\hat{t_i})} dt_1 \wedge \cdots \wedge d\hat{t_i}\cdots \wedge dt_k \langle\eta^{1,0},\cdots,\hat{\eta}^{i,0},\cdots \eta^{k,0},\mathcal{F}_{0,t_1 \cdots t_k}^{r-k+1};\mathcal{H}_{0,t_1 \cdots t_k}\rangle_{\mathfrak{v}_0\mathfrak{v}_1}\notag\\
       &\quad+\sum\limits_{i=1}^{k}\int_{\Delta_{k-1}\cap \{t_1+\cdots +t_k=1\}} dt_1 \wedge \cdots \wedge d\hat{t_i}\cdots \wedge dt_k \langle\eta^{1,0},\cdots,\hat{\eta}^{i,0},\cdots \eta^{k,0},\mathcal{F}_{0,t_1 \cdots t_k}^{r-k+1};\mathcal{H}_{0,t_1 \cdots t_k}\rangle_{\mathfrak{v}_0\mathfrak{v}_1}.
   \end{align}

   Changing parameters $\{t\}$ to $\{s\}$ as follows:
   \begin{align}
       t_1=1-s_2-\cdots-s_k, t_2=s_2, \cdots ,t_k=s_k,\\
       0 \leqslant s_2,\cdots,s_k \leqslant 1.
   \end{align}
   We can prove that  the  fake curvature 2-form and the curvature 3-form satisfy
   \begin{align}
       \mathcal{F}_{0,t_1 \cdots t_k}=\mathcal{F}_{1,s_2 \cdots s_k},\qquad
       \mathcal{H}_{0,t_1 \cdots t_k}=\mathcal{H}_{1,s_2 \cdots s_k},
   \end{align}
where $\mathcal{F}_{1,s_2 \cdots s_k},\mathcal{H}_{1,s_2 \cdots s_k}$ are the corresponding curvature of $A_{1,s_2 \cdots s_k}=A_1+\sum\limits_{i=2}^k s_i \eta^{i,1},B_{1,s_2 \cdots s_k}=B_1+\sum\limits_{i=2}^k s_i \phi^{i,1}$.
   The last summation on the Eq.\eqref{cup} is equal to 
   \begin{align}
       \int_{\Delta_{k-1}(s)} ds_2 \wedge \cdots  \wedge ds_k \langle\eta^{2,1},\cdots \eta^{k,1},\mathcal{F}_{1,s_2 \cdots s_k}^{r-k+1};\mathcal{H}_{1,s_2 \cdots s_k}\rangle_{\mathfrak{v}_0\mathfrak{v}_1}.
   \end{align}
   Following the same steps for the remaining terms, we obtain:
   \begin{align}
       &d\mathcal{Q}_r^{(k)}((A_0,B_0),\cdots,(A_k,B_k);\Delta_k)\notag\\
       &=\sum\limits_{i=1}^{k}(-1)^i \frac{r!}{(r-k+1)!} \int_{\Delta_{k-1}(\hat{t_i})} dt_1 \wedge \cdots \wedge d\hat{t_i}\cdots \wedge dt_k \langle\eta^{1,0},\cdots,\hat{\eta}^{i,0},\cdots \eta^{k,0},\mathcal{F}_{0,t_1 \cdots t_k}^{r-k+1};\mathcal{H}_{0,t_1 \cdots t_k}\rangle_{\mathfrak{v}_0\mathfrak{v}_1}\notag\\
       &\quad+\frac{r!}{(r-k+2)!}\sum\limits_{i=1}^{k}(-1)^{i+k-1} \int_{\Delta_{k-1}(\hat{t_i})} dt_1 \wedge \cdots \wedge d\hat{t_i}\cdots  \wedge dt_k \displaystyle\{\sum\limits_{l=1 }^{i-1} (-1)^{l} \langle\eta^{1,0},\cdots,\hat{\eta}^{l,0},\cdots \hat{\eta}^{i,0}\cdots,\eta^{k,0},\notag\\
       &\quad\mathcal{F}_{0,t_1 \cdots t_k}^{r-k+2};\phi^{l,0}\rangle_{\mathfrak{v}_0\mathfrak{v}_1}+\sum\limits_{l=i+1}^{k} (-1)^{l-1} \langle\eta^{1,0},\cdots,\hat{\eta}^{i,0},\cdots \hat{\eta}^{l,0}\cdots,\eta^{k,0},\mathcal{F}_{0,t_1 \cdots t_k}^{r-k+2};\phi^{l,0}\rangle_{\mathfrak{v}_0\mathfrak{v}_1}\displaystyle\}\notag\\
       &\quad+\frac{r!}{(r-k+1)!} \int_{\Delta_{k-1}(s)} ds_2 \wedge \cdots  \wedge ds_k \langle\eta^{2,1},\cdots \eta^{k,1},\mathcal{F}_{1,s_2 \cdots s_k}^{r-k+1};\mathcal{H}_{1,s_2 \cdots s_k}\rangle_{\mathfrak{v}_0\mathfrak{v}_1}\notag\\
       &\quad+\frac{r!}{(r-k+2)!} (-1)^{k}\int_{\Delta_{k-1}(s)} ds_2 \wedge \cdots  \wedge ds_k \sum\limits_{l=2}^{k} (-1)^{l} \langle\eta^{2,1},\cdots,\hat{\eta}^{l,1},\cdots \eta^{k,1},\mathcal{F}_{1,s_2 \cdots s_k}^{r-k+2};\phi^{l,1}\rangle_{\mathfrak{v}_0\mathfrak{v}_1}\notag\\
       &=\sum\limits_{i=0}^{k} (-1)^i \mathcal{Q}_r^{(k-1)}((A_0,B_0),\cdots,
       (\hat{A^i},\hat{B^i}),\cdots,(A_k,B_k))\notag\\
       &=(\tilde{\Delta} \mathcal{Q}_r^{(k-1)})((A_0,B_0),\cdots,(A_k,B_k);\partial\Delta_k).
   \end{align}
    
    \end{proof}
We refer to the equation \eqref{descent} obtained in the theorem \eqref{theorem} as the \textbf{higher descent equations}. It is worth noting that this result can also be obtained by another approach, namely by generalizing the higher extended Cartan homotopy formula established for the strict setting in \cite{danhuaechf}. The operators in that formula are purely combinatorial and remain well-defined in our framework, provided one replaces the strict invariant polynomials with the semistrict invariants \eqref{invariant} constructed from the semistrict 2-connection curvatures. The proof we present here adopts a direct computational method, which keeps the argument self-contained and more intuitive.

    Theorem \eqref{equation} shows that there exists the following sequence of 
   semistrict 2-Chern-Simons $\mathcal{Q}$-cochains.
   \begin{align*}
       0 &\xrightarrow{d^{-1}} \tilde{\Delta}{\mathcal{Q}_r^{(0)}} \xrightarrow{d^{-1}} \tilde{\Delta}{\mathcal{Q}_r^{(1)}} \xrightarrow{d^{-1}} \tilde{\Delta}{\mathcal{Q}_r^{(2)}} 
        \xrightarrow{d^{-1}} \cdots \xrightarrow{d^{-1}} \tilde{\Delta}{\mathcal{Q}_r^{(r)}} \xrightarrow{d^{-1}} \tilde{\Delta}{\mathcal{Q}_r^{(r+1)}} = 0.
   \end{align*}
  The sequence begins with the semistrict Chern form
   \begin{align}
       \mathcal{Q}_r^{(0)}((A_0,B_0);\Delta_0)=\langle \mathcal{F}^r;\mathcal{H} \rangle_{\mathfrak{v}_0 \mathfrak{v}_1}
   \end{align}
   which is a closed $(2r+3)$-form. 

   For a $(2r-k+3)$-dimensional submanifold $N \subset M$, we define the $k$-th semistrict 2-Chern-Simons type  characteristic classes as the integral
   \begin{align}
       \Upsilon^{(k)}(A_0, \dots, A_k; B_0, \dots, B_k; N) = \int_N \mathcal{Q}_r^{k}((A_0,B_0),\cdots,(A_k,B_k);\Delta_k).
   \end{align}
   If $M^{2r-k+4}$ is a closed manifold (i.e., $\partial M = \emptyset$), the integral of the left-hand side of the Eq.\eqref{descent} vanishes, implying
   \begin{align}
      \tilde{\Delta} \Upsilon^{(k-1)}\big( M \big)= 0.
   \end{align}
   Consequently, \(\Upsilon^{(k)}\) defines a cocycle in the appropriate cohomology theory when evaluated on closed manifolds. In the presence of a boundary, Stokes' theorem yields the compatibility condition
   \begin{align}
      \tilde{\Delta} \Upsilon^{(k-1)}\big(  M \big)= \Upsilon^{(k)}\big( \partial M \big) , 
   \end{align}
   which relates the invariant on the boundary to the higher classes on the bulk. This establishes a functorial correspondence between the category of bordisms equipped with 2-connections and the category of complex numbers, thereby generalizing the classical Chern-Simons theory to the setting of semistrict 2-gauge theory.
 
 The theorem \eqref{equation} expresses the  higher descent equations as an integral over a parametric simplex, which is theoretically complete. However, its direct evaluation involves computing a multiparameter integral and can be computationally cumbersome. To provide a more tractable formulation, the following lemma furnishes an explicit result for this integration. The resulting expression consists solely of algebraic combinations of the gauge fields  $(A_i,B_i)$ and their exterior derivatives, with no remaining parametric integrals. The following lemma  offers a direct and convenient starting point for subsequent derivations and applications.

 \begin{lemma}\label{lemma}
     
  The integral appearing in Theorem \eqref{equation} evaluates explicitly  the following  formula:
     \begin{align}
         \int_{\Delta_{k}} \mathcal{F}_{0,t_1 \cdots t_k} dt_1 \wedge \cdots \wedge dt_k&=\frac{1}{(k+1)!}\sum\limits_{i=0}^k (dA_i-\alpha(B_i))+\frac{1}{(k+2)!}\sum\limits_{0\leqslant 
             i\leqslant j \leqslant k}  [A_i,A_j],\\
         \int_{\Delta_{k}} \mathcal{H}_{0,t_1 \cdots t_k} dt_1 \wedge \cdots \wedge dt_k&=
         \frac{1}{(k+1)!}\sum\limits_{i=0}^k dB_i+\frac{2}{(k+2)!}\sum\limits_{i=0}^k [A_i,B_i]\\\notag
         &+\frac{1}{(k+2)!}\sum\limits_{0\leqslant 
             i\neq j \leqslant k}  [A_i,B_j]
         -\frac{1}{(k+3)!}\sum\limits_{0\leqslant 
             i\leqslant j \leqslant l \leqslant k}  [A_i,A_j,A_l].
     \end{align}
     
    \end{lemma}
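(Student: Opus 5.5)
The plan is a direct computation in barycentric coordinates on $\Delta_k$. Set $t_0:=1-t_1-\cdots-t_k$. Since $\eta^{i,0}=A_i-A_0$ and $\phi^{i,0}=B_i-B_0$, the interpolations become symmetric in the $k+1$ connections:
\begin{align*}
A_{0,t_1\cdots t_k}=\sum_{i=0}^k t_iA_i,\qquad B_{0,t_1\cdots t_k}=\sum_{i=0}^k t_iB_i .
\end{align*}
The $t_i$ are constants on $M$, so $d$ passes through them. Substituting into the curvature formulas \eqref{cur} gives polynomials in the $t_i$ with form-valued coefficients:
\begin{align*}
\mathcal{F}_{0,t_1\cdots t_k}&=\sum_{i}t_i\,(dA_i-\alpha(B_i))+\frac12\sum_{i,j}t_it_j\,[A_i,A_j],\\
\mathcal{H}_{0,t_1\cdots t_k}&=\sum_i t_i\,dB_i+\sum_{i,j}t_it_j\,[A_i,B_j]-\frac16\sum_{i,j,l}t_it_jt_l\,[A_i,A_j,A_l].
\end{align*}
The integration over $\Delta_k$ then reduces to integrating monomials in the $t_i$.

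Next, take $dt_1\wedge\cdots\wedge dt_k$ with the standard orientation, so that it is Lebesgue measure on $\Delta_k$. Apply the Dirichlet integral
\begin{align*}
\int_{\Delta_k}t_0^{a_0}\cdots t_k^{a_k}\,dt_1\wedge\cdots\wedge dt_k=\frac{a_0!\cdots a_k!}{(k+a_0+\cdots+a_k)!}.
\end{align*}
This can be proved by induction on $k$ using Beta integrals, or quoted as standard. It yields the following values:
\begin{itemize}
\item $\int t_i=1/(k+1)!$;
\item $\int t_i^2=2/(k+2)!$ and $\int t_it_j=1/(k+2)!$ for $i\neq j$;
\item $\int t_i^3=6/(k+3)!$, $\int t_i^2t_j=2/(k+3)!$ for $i\neq j$, and $\int t_it_jt_l=1/(k+3)!$ for distinct indices.
\end{itemize}

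The final step regroups ordered sums into sums over multisets of indices. This uses the graded symmetry \eqref{grad}. For $\mathfrak{v}_0$-valued $1$-forms it gives $[A_i,A_j]=[A_j,A_i]$, and $[A_i,A_j,A_l]$ is totally symmetric, because the sign $(-1)^\sigma$ cancels against the form-degree sign.

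\emph{Quadratic term in $\mathcal{F}$.} A diagonal pair gives $\frac12\cdot\frac{2}{(k+2)!}[A_i,A_i]$. An off-diagonal pair $i<j$ appears twice in the ordered sum, giving $\frac12\cdot2\cdot\frac1{(k+2)!}[A_i,A_j]$. Both have coefficient $1/(k+2)!$, so the total is $\frac1{(k+2)!}\sum_{i\le j}[A_i,A_j]$.

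\emph{Mixed term in $\mathcal{H}$.} The term $[A_i,B_j]$ involves two different fields, so it is not symmetrized. It picks up $2/(k+2)!$ when $i=j$ and $1/(k+2)!$ when $i\neq j$.

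\emph{Cubic term in $\mathcal{H}$.} The multiset types contribute as follows:
\begin{itemize}
\item $\{i,i,i\}$: one ordering, $\frac16\cdot\frac6{(k+3)!}$;
\item $\{i,i,j\}$: three orderings, $\frac16\cdot3\cdot\frac2{(k+3)!}$;
\item $\{i,j,l\}$ distinct: six orderings, $\frac16\cdot6\cdot\frac1{(k+3)!}$.
\end{itemize}
Each is $1/(k+3)!$, which gives $-\frac1{(k+3)!}\sum_{i\le j\le l}[A_i,A_j,A_l]$. The linear terms give the $\frac1{(k+1)!}$ sums directly.

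I expect no deep obstacle. The points needing care are bookkeeping:
\begin{itemize}
\item fixing the orientation of $dt_1\wedge\cdots\wedge dt_k$, so that no extra sign appears when it is moved past the form-valued integrand;
\item verifying that the graded symmetry \eqref{grad} really makes the brackets of $1$-forms symmetric rather than antisymmetric, since this is exactly what makes the diagonal and off-diagonal coefficients coincide.
\end{itemize}
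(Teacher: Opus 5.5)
Your proposal is correct and follows essentially the same route as the paper's proof: pass to barycentric coordinates with $t_0=1-\sum_{i\geq 1}t_i$, so that $A_{0,t_1\cdots t_k}=\sum_i t_iA_i$ and $B_{0,t_1\cdots t_k}=\sum_i t_iB_i$; integrate the resulting monomials with the generalized Beta (Dirichlet) formula; and regroup ordered sums into multiset sums using the graded symmetry \eqref{grad}. Your monomial values (e.g.\ $\int_{\Delta_k}t_it_j\,dt_1\wedge\cdots\wedge dt_k=1/(k+2)!$ for $i\neq j$) and your multiplicity counts (e.g.\ $\sum_{i,j}[A_i,A_j]=\sum_i[A_i,A_i]+2\sum_{i<j}[A_i,A_j]$) are correct; the paper's intermediate displays \eqref{2} and \eqref{4} have typos at exactly these points, but its final formulas agree with yours.
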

 \begin{proof}
   \begin{align}
         A_{0,t_1 \cdots t_k}=\sum\limits_{i=0}^k t_i A_i, \quad B_{0,t_1 \cdots t_k}=\sum\limits_{i=0}^k t_i B_i, 
   \end{align}
with $t_0=1-\sum\limits_{i=1}^k t_i$.  Then, we have  
\begin{align}\label{simplex}
    \mathcal{F}_{0;t_1 \cdots t_k}&=\sum\limits_{i=0}^{k}t_i (dA_i-\alpha(B_i))+\frac{1}{2}\sum\limits_{0\leqslant i,j \leqslant k} t_i t_j [A_i,A_j],\\
    \mathcal{H}_{0;t_1 \cdots t_k}
   &=\sum\limits_{i=0}^{k}t_i dB_i+ \sum\limits_{0\leqslant i,j \leqslant k} t_i t_j [A_i,B_j]-\frac{1}{6}\sum\limits_{0\leqslant i, j ,l \leqslant k} t_i t_j t_k [A_i,A_j,A_l] .
\end{align}
Recall the definition of the generalized Beta function
\begin{align}
    B(m_1+1,\cdots,m_k+1)&=\int_{\Delta_{k}} (1-t_1-\cdots-t_k)^{m_0} t_1^{m_1} \cdots t_k^{m_k} dt_1 \wedge \cdots \wedge dt_k\\\nonumber
    &=\frac{\prod _{i=0}^k \Gamma (m_i+1)}{\Gamma (\sum_{i=0}^{k}m_i +k+1)}.
\end{align}
$\Gamma(x)$ is the Gamma function, which satisfies $\Gamma(n+1)=n!$  for all $ n \in \mathbb{N}^+$.

Applying this formula termwise to the integral of $\mathcal{F}_{0;t_1 \cdots t_k},\mathcal{H}_{0;t_1 \cdots t_k}$ on the $k$-simplex, we obtain
\begin{align}\label{1}
    \int_{\Delta_{k}} t_i dt_1 \wedge \cdots \wedge dt_k=\frac{1}{(k+1)!},
\end{align}
    \begin{equation}\label{2}
        \int_{\Delta_{k}} t_i t_j dt_1 \wedge \cdots \wedge dt_k
        \begin{cases}
            \frac{2}{(k+1)!}  &i = j, 
            \\
            \frac{1}{(k+1)!}  &i \neq j,
        \end{cases}      
    \end{equation}
 \begin{equation}\label{3}
    \int_{\Delta_{k}} t_i t_j t_l dt_1 \wedge \cdots \wedge dt_k
    \begin{cases}
        \frac{6}{(k+3)!}  &i = j =l,
        \\
        \frac{2}{(k+3)!}  &i =j \neq l,
        \\
        \frac{1}{(k+3)!}  &i \neq j \neq l.
    \end{cases}      
\end{equation}
The graded-symmetry relation \eqref{grad} yields the following identities
\begin{align}\label{4}
    \sum\limits_{0\leqslant i,j \leqslant k}  [A_i,A_j]&=\sum\limits_{i=0} ^{k} [A_i,A_i]+\sum\limits_{0\leqslant 
        i< j \leqslant k} [A_i,A_j],\\
   \sum\limits_{0\leqslant 
       i\leqslant j \leqslant k}  [A_i,A_j,A_l]&=\sum\limits_{i=0} ^{k} [A_i,A_i,A_i]+3\sum\limits_{0\leqslant 
       i< l \leqslant k}  [A_i,A_i,A_l]+6\sum\limits_{0\leqslant 
       i< j < l \leqslant k}  [A_i,A_j,A_l].
\end{align}
By integrating both sides of \eqref{simplex} over $\Delta_k$
and substituting \eqref{1}, \eqref{2}, \eqref{3}, \eqref{4}, we derive Lemma \eqref{lemma}.
 \end{proof}

   In the case of $k=1$, the definition \eqref{theorem} tells us 
   \begin{equation}
       \left\{\begin{aligned}
           &\mathcal{Q}_r^{(0)}((A_0,B_0);\Delta_0)=\langle \mathcal{F}^r;\mathcal{H} \rangle_{\mathfrak{v}_0 \mathfrak{v}_1},\\
           &\mathcal{Q}_r^{(1)}((A_0,B_0), (A_1,B_1);\Delta_1)=\int\limits_{0}^{1} dt( r \langle A_1-A_0, \mathcal{F}_{0,t}^{r-1};\mathcal{H}_{0,t} \rangle_{\mathfrak{v}_0 \mathfrak{v}_1}+
            \langle \mathcal{F}_{0,t}^{r};B_1-B_0 \rangle_{\mathfrak{v}_0 \mathfrak{v}_1}).  
       \end{aligned}\right.
   \end{equation}

   Then, the higher descent equations \eqref{equation} gives the semistrict 2-Chern-Weil theorem:
   \begin{align}\label{cs}
      \langle \mathcal{F}_1^r;\mathcal{H}_1 \rangle_{\mathfrak{v}_0 \mathfrak{v}_1}-\langle \mathcal{F}_0^r;\mathcal{H}_0 \rangle_{\mathfrak{v}_0 \mathfrak{v}_1}=d\mathcal{Q}_r^{(1)}(A_0, A_1;\Delta_1).
   \end{align}
 Setting $A_0=0, A_1=A; B_0=0, B_1=B$, Eq.\eqref{cs} becomes:
   \begin{align}
       &\langle \mathcal{F}^r;\mathcal{H} \rangle_{\mathfrak{v}_0 \mathfrak{v}_1}=d\mathcal{CS}_{2r+2}.
    \end{align}
A direct calculation gives the \textbf{ $\bm{(2r+2)d}$ semistrict 2-Chern-Simons form}
\begin{align}\label{2cs}
      \mathcal{CS}_{2r+2}((A,B))&=\frac{r}{2^{r}}\langle A,(dA+\frac{1}{3}[A,A]-\alpha(B) )^{r-1};dB+\frac{2}{3}[A,B]-\frac{1}{12}[A,A,A] \rangle_{\mathfrak{v}_0 \mathfrak{v}_1} \\\nonumber
       &\quad+\frac{1}{2^{r}}\langle  (dA+\frac{1}{3}[A,A]-\alpha(B))^{r};B \rangle_{\mathfrak{v}_0 \mathfrak{v}_1}
   \end{align}
     Using \eqref{pro}, we have
    \begin{align}\label{v1}
        \langle A;[A,B] \rangle_{\mathfrak{v}_0 \mathfrak{v}_1}=\langle [A,A];B \rangle_{\mathfrak{v}_0 \mathfrak{v}_1}.
        \end{align}
 Suppose $r=1$, substituting \eqref{v1} into \eqref{2cs}, we get  $4d$ semistrict 2-Chern-Simons form 
    \begin{align}
        \mathcal{CS}_{4}((A,B))&=\frac{1}{2}\langle A;dB+\frac{2}{3}[A,B]-\frac{1}{12}[A,A,A] \rangle_{\mathfrak{v}_0 \mathfrak{v}_1} 
        + \frac{1}{2}\langle  dA+\frac{1}{3}[A,A]-\alpha(B);B \rangle_{\mathfrak{v}_0 \mathfrak{v}_1}\\\nonumber
        &=\frac{1}{2}\langle 2\mathcal{F}+\alpha(B);B \rangle_{\mathfrak{v}_0 \mathfrak{v}_1}-\frac{1}{24}\langle A;[A,A,A] \rangle_{\mathfrak{v}_0 \mathfrak{v}_1}-d\langle A,B \rangle_{\mathfrak{v}_0 \mathfrak{v}_1}.
    \end{align}
 This format is consistent with that used in the literature \cite{zucchini 4-cs}.
   
   In the case $k=2$, the theorem \eqref{equation} gives  higher triangle equation
   \begin{align}\label{CS tran}
       &\mathcal{Q}_r^{(1)}((A_1,B_1), (A_2,B_2);\Delta_1)-\mathcal{Q}_r^{(1)}((A_0,B_0), (A_2,B_2);\Delta_1)+\mathcal{Q}_r^{(1)}((A_0,B_0), (A_1,B_1);\Delta_1)\\\nonumber
       &=d\mathcal{Q}_r^{(2)}((A_0,B_0), (A_1,B_1),(A_2,B_2);\Delta_2).
   \end{align}
   The $\mathcal{Q}_r^{(1)}$ are defined as above, and $\mathcal{Q}_r^{(2)}((A_0,B_0), (A_1,B_1), (A_2,B_2);\Delta_2)$ is
   \begin{align}
       &\mathcal{Q}_r^{(2)}((A_0,B_0), (A_1,B_1),(A_2,B_2);\Delta_2)\\\notag 
       &= \int_{\Delta_{2}} dt_1 \wedge dt_2 \big( r(r-1) \langle\eta^{1,0},\eta^{2,0},\mathcal{F}_{0,t_1  t_2}^{r-2};\mathcal{H}_{0,t_1 t_2} \rangle_{\mathfrak{v}_0\mathfrak{v}_1}
       -r \langle\eta^{2,0},\mathcal{F}_{0,t_1  t_2}^{r-1};\phi^{1,0} \rangle_{\mathfrak{v}_0\mathfrak{v}_1}+r\langle\eta^{1,0},\mathcal{F}_{0,t_1 t_2}^{r-1};\phi^{2,0} \rangle_{\mathfrak{v}_0\mathfrak{v}_1}\big).
   \end{align}
   Consider $ (A_0,B_0)=(0,0), (A_1,B_1)=(A^{g},B^{g}), (A_2,B_2)=(A,B)$, Eq.\eqref{CS tran} becomes
   \begin{align}
       \mathcal{CS}_{2r+2}((A,B)^{g})-\mathcal{CS}_{2r+2}(A,B)=-\mathcal{Q}_r^{(1)}((A,B)^g, (A,B);\Delta_1)+d\mathcal{Q}_r^{(2)}(0, (A,B)^g, (A,B);\Delta_2)
   \end{align}

It is worth pointing out that the higher anomaly forms are contained in $\mathcal{Q}_r^{(1)}\big((A,B)^g, (A,B);\Delta_1\big)$. However, the expression $\mathcal{Q}_r^{(1)}$ obtained through the descent procedure is not in its simplest form: it mixes the genuine anomaly term with an additional total derivative term. The non-triviality of the anomaly under the differential $d$ is equivalent to the non‑vanishing of the cohomology class represented by $\mathcal{CS}_{2r+2}(\sigma_g,\Sigma_g)$, which requires the Chevalley–Eilenberg cohomology group $H_{CE}^{2r+2}(\mathfrak{v})$ to be non‑trivial. In four dimensions (i.e., for $r=1$), explicit calculations confirm that $\mathcal{CS}_4(\sigma_g,\Sigma_g)$ indeed represents a non‑trivial class \cite{zucchini 4-cs}. For general $r$, we naturally conjecture that this remains true, although a rigorous proof is still an open problem. In this paper, we work directly with the descent expression and treat $\mathcal{Q}_r^{(1)}$ itself as the effective representation of the Wess-Zumino-Witten anomaly for the purpose of our analysis.

  To sum up, the higher descent equations developed in this paper are from the semistrict higher gauge framework. They naturally reduce to the strict case \cite{danhua2024higher,danhuaechf}. By setting the three brackets to zero, our formulation recovers the higher descent equations based on differential crossed module, thereby demonstrating that the strict theory emerges as a special  case of the more general semistrict setting. In this strict higher gauge theory, the Wess-Zumino-Witten term vanishes identically, rendering the higher Chern-Simons action on a closed manifold strictly gauge invariant, thus recovering a familiar property of the classical theory. The details of this degeneration are provided in Appendix \ref{appendices}.

\section{Conclusion and outlook}

In this work, we develop the higher descent equations within the framework of semistrict higher gauge theory based on a 2‑term $L_{\infty}$ algebra. Starting from the higher curvature and Bianchi identities, we construct higher Chern-Simons type characteristic classes, verify that they satisfy the higher descent equations, and demonstrate that they not only provide a higher analog of the Chern-Weil theorem, yielding explicit higher Chern-Simons actions in arbitrary even dimensions but also encode the structure of higher gauge anomalies.

Based on the framework of descent equations in semi-strict higher gauge theory established in this paper, several directions merit further investigation.  First, in ordinary Chern–Simons theory, the variation of the Chern-Simons form under a finite gauge transformation splits into the Chern-Simons form of a flat connection and an exact term \cite{zumino}. It would be desirable to obtain a non-minimal representation of the anomaly $\mathcal{Q}_r^{(1)}$ in the semistrict higher gauge theory. We conjecture that this representation is given by the 2-Chern–Simons form for a flat connection. Constructing it amounts to finding an exact form $\alpha$
 such that 
 \begin{align}
    \mathcal{CS}_{2r+2}((A,B)^{g})-\mathcal{CS}_{2r+2}(A,B)=-\mathcal{CS}_{2r+2}(\sigma_g,\Sigma_g)+d\alpha .
\end{align}
 Second, while the present discussion is restricted to 2‑term $L_{\infty}$
algebras, a systematic treatment of finite gauge transformations and their associated anomalies for general $L_{\infty}$ algebras remains to be further developed.

\begin{appendices}
    \section{2-term \texorpdfstring{$L_{\infty}$}{L infty} algebra}\label{appendices}
    A 2-term $L_{\infty}$ algebra $\mathfrak{v}$ consists of two real vector spaces $\mathfrak{v}_0$ and $\mathfrak{v}_1$ together with the following linear maps:
    \begin{itemize}
        \item  $\alpha:\mathfrak{v}_1 \rightarrow \mathfrak{v}_0$
       \item  $[\cdot, \cdot]\colon \mathfrak{v}_0 \wedge \mathfrak{v}_0 \to \mathfrak{v}_0$
       \item $[\cdot, \cdot]\colon \mathfrak{v}_0 \otimes \mathfrak{v}_1 \to \mathfrak{v}_1$
         \item $[\cdot, \cdot, \cdot]\colon \mathfrak{v}_0 \wedge \mathfrak{v}_0 \wedge \mathfrak{v}_0 \to \mathfrak{v}_1$
    \end{itemize}
which are required to satisfy the following relations 
\begin{align}
    &\alpha([x,X])-[x,\alpha(X)]=0,\\
    &[\alpha (X),Y]+[\alpha (Y),X]=0,\\
    &[x,[y,z]]+[y,[z,x]]+[z,[x,y]]-\alpha([x,y,z])=0,\\
    &[x,[y,X]]-[y,[x,X]]-[x,y],X]-[x,y, \alpha (X)]=0,\\
    &[x,y,[z,t]]+[x,z,[t,y]]+[x,t,[y,z]]-[y,z,[t,x]]-[z,t,[y,x]]-[t,y,[z,x]]\\\nonumber
    &-[x,[y,z,t]]+[y,[z,t,x]]-[z,[t,x,y]]+[t,[x,y,z]]=0.    
\end{align}
    for all $x,y,z,t\in \mathfrak{v}_0,X,Y \in \mathfrak{v}_1.$
    
    Let  $\mathfrak{v}$,  $\mathfrak{v}'$ be 2-term $L_{\infty}$ algebras. A 2-term $L_{\infty}$ algebra 1-morphism from $\phi:\mathfrak{v} \rightarrow \mathfrak{v}'$ consists of the following data:
    \begin{itemize}
        \item $\phi_0 : \mathfrak{v}_0 \rightarrow \mathfrak{v}'_0$,
        \item  $\phi_1 : \mathfrak{v}_1 \rightarrow \mathfrak{v}'_1$,
    \item  $\phi_2 : \mathfrak{v}_0 \wedge \mathfrak{v}_0 \rightarrow \mathfrak{v}'_1$.
    \end{itemize}
    These are required to satisfy the following relations:
   \begin{align}
       & \phi_0(\alpha_{\mathfrak{v}}( X))-\alpha_{\mathfrak{v'}}(\phi_1(X))=0, \\
       & \phi_0([x,y]_{\mathfrak{v}})-[\phi_0(x),\phi_0(y)]_{\mathfrak{v'}}-\alpha_{\mathfrak{v'}}(\phi_2(x,y))=0, \\
       & \phi_1([x,X]_{\mathfrak{v}})-[\phi_0(x),\phi_1(X)]_{\mathfrak{v'}}-\phi_2(x,\alpha_{\mathfrak{v}} (X))=0,\\
       &[\phi_0(x),\phi_2(y,z)]_{\mathfrak{v'}}+[\phi_0(y),\phi_2(z,x)]_{\mathfrak{v'}}+[\phi_0(z),\phi_2(x,y)]_{\mathfrak{v'}}+\phi_2(x,[y,z]_{\mathfrak{v}}) \\
       &+\phi_2(y,[z,x]_{\mathfrak{v}})+\phi_2(z,[x,y]_{\mathfrak{v}})-\phi_1([x,y,z]_{\mathfrak{v}})+[\phi_0(x),\phi_0(y),\phi_0(z)]_{\mathfrak{v'}}=0.
   \end{align}
   The set of all 2-term $L_{\infty}$ algebra 1-morphism from $\mathfrak{v}$
   to  $\mathfrak{v}'$ is denoted by $\text{Hom}(\mathfrak{v},\mathfrak{v}').$
  A 2-term $L_{\infty}$ algebra homomorphism $\phi:\mathfrak{v} \rightarrow \mathfrak{v}'$ is invertible  if and only if $\phi_0$ and $\phi_1$ are invertible as linear maps. The inverse homomorphism is the triple
  \begin{align}
      (\phi^{-1})_0=(\phi_0)^{-1} , \quad (\phi^{-1})_1=(\phi_1)^{-1} ,\\
      (\phi^{-1})_2 (x,y)=-(\phi_1)^{-1} \phi_2(\phi_0^{-1}(x), \phi_1^{-1}(y)).
  \end{align}
     All invertible homomorphisms
    from $\mathfrak{v}$ to itself are called automorphism of $\mathfrak{v}$. Their set is denoted by $\text{Aut}_1(\mathfrak{v})$.
    
    Given a 2-term $L_{\infty}$ algebra, a 2-derivation of $\mathfrak{v}$ is a linear map
    $\Gamma :\mathfrak{v}_0 \rightarrow \mathfrak{v}_1 $. The set of all 2-derivations of $\mathfrak{v}$ is denoted by $\mathfrak{aut}_1(\mathfrak{v})$.
    
    A differential Lie crossed module consists of the following elements.
    \begin{itemize}
    \item A pair of Lie algebras $\mathfrak{g},\mathfrak{h}$,  
    \item A Lie algebra morphism $t :\mathfrak{h} \rightarrow \mathfrak{g}$,
   \item A Lie algebra morphism $\mu : \mathfrak{g}\rightarrow \text{Der}(\mathfrak{h})$, where Der($\mathfrak{h}$) is the Lie algebra of derivations of $\mathfrak{h}$.
     \end{itemize}
    These are required to satisfy the following relations:
     \begin{align}
        &t(\mu(g)(h))=[g,t(h)],\\
        &\mu(t(h)(h'))=[h,h'].
    \end{align}
    
   A 2-term $L_{\infty}$ algebra $\mathfrak{v}$ is strict if $[\cdot, \cdot, \cdot]=0$. There exists a one-to-one correspondence between strict 2-term $L_{\infty}$
     and differential Lie crossed modules.
     \begin{itemize}
        \item $\mathfrak{v}_{0}=\mathfrak{g}$,
        \item $\mathfrak{v}_{1}=\mathfrak{h}$, 
        \item $\alpha(X)=t(X)$,
      \item  $[x,y]=[x,y]_{\mathfrak{g}}$, 
       \item  $[x,X]=\mu(x)(X)$,
      \item $[x,y,z]=0$.
  \end{itemize}
    
  A crossed module 1-gauge transformation contains the following data:
  \begin{itemize}
  \item A map $\gamma \in Map(M,G)$,  
  \item An element $\phi \in \Omega^{1} (M,\mathfrak{h})$.
\end{itemize}
It acts on the connection doublet $(A,B)$ as
    \begin{align}
       A^{\gamma}&=\gamma A\gamma^{-1}-d\gamma\gamma^{-1}-t (\phi ),\\
       B^{\gamma}&=\mu(\gamma)(B)-\mu(A^{\gamma})(\phi)-d\phi-\frac{1}{2}[\phi,\phi].
    \end{align}
 Under this transformation the curvature forms change as  
 \begin{align}
     \mathcal{F}^{\gamma}&=\gamma \mathcal{F} \gamma^{-1},\\
     \mathcal{H}^{\gamma}&=\mu(\gamma)(\mathcal{H})-\mu(\mathcal{F}^{\gamma})(\phi).
 \end{align}
 As noted above, the differential crossed modules correspond to the strict 2-term $L_{\infty}$ algebras; analogously, the crossed module 1-gauge transformation can be obtained by specifying a concrete form of 
 \begin{align}
     g_0&=ad_\gamma, \quad g_1=\mu(\gamma)(\cdot),\quad g_2=0,\\
     \sigma_g&=\gamma^{-1}d\gamma+\gamma^{-1}t(\phi)\gamma,\\
     \Sigma_g&=\mu(\gamma^{-1})(d\phi+\frac{1}{2}[\phi,\phi]),\\
     \tau_g(x)&=\mu(x)(\mu(\gamma^{-1})(\phi)).
   \end{align}

\end{appendices}


\begin{thebibliography}{60}
    
   \bibitem{han1985chern} H. Y. Guo, K. Wu, S. K. Wang,  Commun. Theor. Phys. 4, 113 (1985).  \url{https://doi.org/10.1088/0253-6102/4/1/113}
        
     \bibitem{han1985anomalies} H. Y. Guo, B. Y. Hou, S. K. Wang, K. Wu, Commun. Theor. Phys. 4, 145 (1985). \url{https://doi.org/10.1088/0253-6102/4/1/145}
     
    \bibitem{wess1971consequences} J. Wess, B. Zumino, Phys. Lett. B 37, 95  (1971).   \url{https://doi.org/10.1016/0370-2693(71)90582-X}
    
    
    \bibitem{witten1983global} E. Witten, Nucl. Phys. B 223, 422 (1983). \url{https://doi.org/10.1016/0550-3213(83)90063-9}
    
    
    \bibitem{chou2009gauge} K. C. Chou, H. Y. Guo, K. Wu, X. C. Song, Phys. Lett. B 134, 67 (1984). \url{https://doi.org/10.1016/0370-2693(84)90986-9}
   
    
    \bibitem{Chou_1985} K. C. Chou, H. Y. Guo, K. Wu, Commun. Theor. Phys. 4, 91 (1985). \url{https://doi.org/10.1088/0253-6102/4/1/145}
    
    \bibitem{Chou_1984}   K. C. Chou, H. Y. Guo, X. Y. Li, K. Wu, X. C. Song, Commun. Theor. Phys. 3, 491 (1984). \url{https://doi.org/10.1088/0253-6102/3/4/491}
    
    \bibitem{chou1984symmetric}   K. C. Chou, H. Guo, K. Wu, X. Song, Commun. Theor. Phys. 3, 593 (1984). \url{https://doi.org/10.1142/9789814280389_0063}
    
    \bibitem{chern1974characteristic} S. S. Chern, J. Simons, Ann. Math. 99, 48 (1974). \url{https://doi.org/10.2307/1971013}
    
    
    \bibitem{alekseev2018chern}   A. Alekseev, F. Naef, X. M. Xu, C. C. Zhu, Lett. Math. Phys. 108, 757 (2018). \url{https://doi.org/10.1007/s11005-017-0985-4}
    
     
    \bibitem{kang2018descent}  B. Kang, Y. Pan, K. Wu, J. Yang, Z. F. Yang, Commun. Theor. Phys. 69, 375 (2018). \url{https://doi.org/10.1088/0253-6102/69/4/375}
    
    \bibitem{izaurieta2015chern}  F. Izaurieta, I. Muñoz, P. Salgado, Phy. Lett. B 750, 39 (2015). \url{https://doi.org/10.1016/j.physletb.2015.08.030}
    
    \bibitem{izaurieta2017chern} F. Izaurieta, P. Salgado, S. Salgado, Phys. Lett. B 767, 360 (2017). \url{https://doi.org/10.1016/j.physletb.2017.02.016}
    
    \bibitem{baze2004higherv} J. Baez , A. Lauda, Theory Appl. Categories 12, 423 (2004). \url{ https://doi.org/10.48550/arXiv.math/0307200}.
    
    \bibitem{baze2004highervi} J. C. Baez, A. S. Crans,  Theory Appl. Categories 12, 492 (2004). \url{http://eudml.org/doc/124217}
    
   \bibitem{lada1993sh} T. Lada, J. Stasheff,  Int. J. Theor. Phys. 32, 1087 (1993). \url{http://doi.org/ 10.1007/BF00671791}

   
  \bibitem{lada1995strongly} T. Lada, M. Markl, Commun. Algebra 23, 2147 (1995). \url{http://doi.org/10.1080/00927879508825335}
    
    \bibitem{brylinski1993}
     J. L. Brylinski, Loop Spaces, Characteristic Classes and Geometric Quantization, 
    (Birkhaeuser, Boston, 1993), pp. 300.
    
    
    \bibitem{breen2005} L. Breen, W. Messing, Adv. Math. 198, 732 (2005). \url{https://doi.org/10.1016/j.aim.2005.06.014}
    
    
    \bibitem{bai2013} C. Bai, Y. Sheng, C. Zhu, Comm. Math. Phys. 320, 149 (2013). \url{https://doi.org/10.1007/s00220-013-1712-3}
    
    \bibitem{lang2015crossed} H. Lang, Z. J. Liu, Appl. Categor. Struct. 23, 781 (2015). \url{https://doi.org/10.1007/s10485-015-9389-8}
    
    
    \bibitem{danhua2023} D. H. Song, M. Y. Wu, K. Wu, J. Yang, JHEP 07, 207 (2023). \url{https://doi.org/10.1007/JHEP07(2023)207}
    
    \bibitem{danhua2024higher} D. H. Song, K. Wu, J. Yang, Phys. Lett. B 848, 138374 (2024). \url{https://doi.org/10.1016/j.physletb.2023.138374}
 
    \bibitem{danhuaechf} D. H. Song, Phys. Lett. B 865, 139471 (2025). \url{https://doi.org/10.1016/j.physletb.2025.139471}
    
    \bibitem{zucchiniwilson} R. Zucchini, \url{
        https://doi.org/10.48550/arXiv.1903.02853}
    
    \bibitem{schenkel5d} A. Schenkel, B. Vicedo, Commun. Math. Phys. 405, 293 (2024). \url{https://doi.org/10.1007/s00220-024-05170-9}
    
  \bibitem{soncini2014} E. Soncini, R. Zucchini, JHEP. 10, 079 (2014). \url{https://doi.org/10.1007/JHEP10(2014)079}
  
  \bibitem{zucchini 4-cs} R. Zucchini, J. Math. Phys. 57, 052301 (2016). \url{https://doi.org/10.1063/1.4947531}
  
   \bibitem{zucchiniopei} R. Zucchini, J. Geom. Phys. 156, 103826 (2020). \url{ http://doi.org/10.1016/j.geomphys.2020.103826}
  
    \bibitem{zucchiniopeii} R. Zucchini, J. Geom. Phys. 156, 103825 (2020). \url{https://doi.org/10.1016/j.geomphys.2020.103825}
    
    \bibitem{zucchiniaksz} R. Zucchini, JHEP. 03, 014 (2013). \url{https://doi.org/10.1007/JHEP03(2013)014}
  
    \bibitem{zuchiniholo} R. Zucchini, JHEP. 06, 025 (2021). \url{https://doi.org/10.1007/JHEP06(2021)025}
  
  
   \bibitem{Salgado2021}S. Salgado, JHEP. 10, 066 (2021). \url{https://doi.org/10.1007/JHEP10(2021)066}
  
  \bibitem{Salgado2022} S. Salgado, JHEP. 04, 142 (2022). \url{https://doi.org/10.1007/JHEP04(2022)142}
  
    \bibitem{zumino} B. Zumino, Y. S. Wu, A. Zee, Nucl. Phys. B. 239, 477 (1984). \url{https://doi.org/10.1016/0550-3213(84)90259-1}
    
\end{thebibliography}
\end{document}